\newtheorem{thm}{Theorem}
\newtheorem{lemma}{Lemma}
\begin{document}
%
\title{Radar-assisted Predictive Beamforming for Vehicular Links: Communication Served by Sensing}
%
%
%

\author{Fan Liu,~\IEEEmembership{Member,~IEEE,}
        Weijie Yuan,~\IEEEmembership{Member,~IEEE,}\\
        Christos Masouros,~\IEEEmembership{Senior~Member,~IEEE,}
        and~Jinhong Yuan,~\IEEEmembership{Fellow,~IEEE}
\thanks{F. Liu and C. Masouros are with the Department of Electronic and Electrical Engineering, University College London, London, WC1E 7JE, UK (e-mail: fan.liu@ucl.ac.uk, chris.masouros@ieee.org).}
\thanks{W. Yuan and J. Yuan are with the School of Electrical Engineering and Telecommunications, University of New South Wales, Sydney, NSW 2052, Australia (e-mail: weijie.yuan@unsw.edu.au, j.yuan@unsw.edu.au).}
}

\maketitle

\begin{abstract}
In vehicular networks of the future, sensing and communication functionalities will be intertwined. In this paper, we investigate a radar-assisted predictive beamforming design for vehicle-to-infrastructure (V2I) communication by exploiting the dual-functional radar-communication (DFRC) technique. Aiming for realizing joint sensing and communication functionalities at road side units (RSUs), we present a novel extended Kalman filtering (EKF) framework to track and predict kinematic parameters of each vehicle. By exploiting the radar functionality of the RSU we show that the communication beam tracking overheads can be drastically reduced. To improve the sensing accuracy while guaranteeing the downlink communication sum-rate, we further propose a power allocation scheme for multiple vehicles. Numerical results have shown that the proposed DFRC based beam tracking approach significantly outperforms the communication-only feedback based technique in the tracking performance. Furthermore, the designed power allocation method is able to achieve a favorable performance trade-off between sensing and communication.
\end{abstract}

\begin{IEEEkeywords}
V2X, radar-communication, beam alignment, Kalman filtering, power allocation.
\end{IEEEkeywords}

%
\IEEEpeerreviewmaketitle

\section{Introduction}
%
%
%
%
\IEEEPARstart{V}{ehicle}-to-everything (V2X) communication will play an important role in the next-generation autonomous vehicles, requiring low-latency Gbps data transmission \cite{8246850}. In addition to wireless communication services, the sensing capability is also essential in the future V2X network, as it should be able to provide robust obstacle detection and high-accuracy localization services on the order of a centimeter \cite{8246850,8306879}. At the time of writing, vehicular localization and networking schemes are mainly built upon global navigation satellite-based systems (GNSS), or default standards such as dedicated short-range communication (DSRC)\cite{5888501} and the D2D mode of LTE-A \cite{7786130}. While these techniques do offer basic V2X functionalities, they are unlikely to fulfill the demanding requirements mentioned above. For instance, the 4G cellular system provides Mbps communication services, with coarse positioning capability at an accuracy of tens of meters, and at a latency that often in excess of 1s \cite{8246850}. With the assistance of dedicated base stations, the real-time kinematic GNSS can reduce the positioning error to the centimeter-level, at the price of considerable latency and low refresh rate \cite{8246850}.
\\\indent To tackle the above issues in the safety-critical vehicular applications, the forthcoming 5G technology, which exploits both the massive multi-input-multi-output (mMIMO) antenna array and the mmWave spectrum, is envisioned as a promising solution \cite{6515173,7400949}. The large bandwidth available in the mmWave band not only offers the advantage of higher data rate, but also significantly improves the resolution for range estimation. On the other hand, the mMIMO array is able to compensate for the path-loss imposed on the mmWave signals by formulating ``pencil-like" beams that point to the directions of users, which also enhances the angular resolution from a localization perspective. As per the above localization and networking requirements, joint sensing and communication designs naturally arise in the V2X scenarios. Having a single device providing both sensing and communication functionalities may significantly reduce the hardware complexity associated with sensors mounted on the vehicles or road infrastructures, while improving the overall performance. To this end, research efforts towards dual-functional radar-communication (DFRC) systems are well underway.
\\\indent Early contributions for DFRC mainly focus on integrating radar and communication signals on the temporal or the frequency domains, where a classic design is to employ a chirp signal, a commonly used radar probing waveform, as the carrier for communication messages. Such examples can be found in \cite{roberton2003integrated,saddik2007ultra}, where ``0" and ``1" are represented by down- and up-chirp signals. Alternatively, pseudo-random codes, which are also able to spread the spectrum, can be used both as radar probing signals and information carriers \cite{jamil2008integrated}. Orthogonal Frequency Division Multiplexing (OFDM) waveform, which has been extensively employed for communications, has been recently regarded as a promising solution for realizing DFRC \cite{sturm2011waveform}. This is because by using OFDM communication signals for radar sensing, the conventionally coupled range and Doppler estimators become independent to each other, and the impact of the random communication data can be simply mitigated by element-wise division \cite{sturm2011waveform}. To further enhance the radar performance, one can also replace the sinusoidal carrier of the OFDM with the chirp signal \cite{7485314}. Accordingly, the fractional Fourier Transform (FrFT) will be employed for signal processing instead of the Fast Fourier Transform (FFT) \cite{7485314}.
\\\indent With the development of multi-antenna technology, the spatial signal processing for DFRC has been explored by more recent treatises. In \cite{7347464}, the authors proposed to detect target using the mainlobe of the MIMO radar, while transmitting useful information to the communication receivers using the sidelobes. Such techniques typically employ a inter-pulse modulation, leading to a low data rate that is tied to the pulse repetition frequency (PRF) of the radar, e.g., kbps. Moreover, the sidelobe based DFRC schemes can work only when a Line-of-Sight (LoS) path exists, which limits its applications in more realistic Non-LoS (NLOS) scenarios. Therefore, the authors of \cite{8288677} have proposed a novel DFRC beamforming design for NLOS communication scenarios, which allows the use of intra-pulse modulation, and hence improves the data rate. Aiming for adapting the nonlinearity of the power amplifiers in both radar and communication systems, the work of \cite{8386661} further considered a constant modulus (CM) DFRC waveform design, which can be efficiently obtained by a sophisticated branch-and-bound (BnB) algorithm.
\\\indent It should be highlighted that in the above works, the DFRC schemes are mainly conceived for lower frequency band, e.g., sub-6 GHz, and are thus difficult to be extended to the V2X applications that operate in the mmWave band. While the work in \cite{Joint_TCOM} proposed a novel framework for mmWave DFRC transceivers, it is not tailored for vehicular communications. More relevant to this work, a radar-aided beam alignment method has been designed in \cite{7888145} for mmWave vehicle-to-infrastructure (V2I) communications. Unlike the DFRC schemes considered above, the authors of \cite{7888145} proposed to deploy an extra radar device in addition to the communication system, which inevitably results in high hardware costs. In view of this, a mmWave DFRC system has been proposed in \cite{8642926} for compromising the bi-static automotive radar and vehicle-to-vehicle (V2V) communications. However, this work does not address the issue of beam tracking under vehicular scenarios with high mobility, which is key to guaranteeing the quality-of-service (QoS) for V2X communications.
\\\indent Existing works on mmWave beam tracking are generally based on the communication-only protocols \cite{8809900,7999215}. A typical beam tracking process requires the transmitter to send pilots to the receiver, according to which the receiver estimates the angle and feeds it back to the transmitter. Note that for high-mobility communication scenarios, it is not sufficient to only track the beam. In fact, the transmitter should have the capability to \emph{predict} the beam, in order to meet the critical latency requirement. To achieve this, the state-of-the-art approaches have exploited Kalman filtering both as an estimator and a predictor for mmWave beam tracking based on the feedback protocol mentioned above \cite{7905941,8025577,8851151,8830375}. Nevertheless, these techniques typically utilize only a small number of pilots for beam tracking (in the extreme case only a single pilot is transmitted \cite{7905941,8851151,8830375}), leading to a limited matched filtering gain for angle estimation, and may thus suffer from serious performance-loss in the vehicular applications that require high-accuracy localization services. Moreover, inserting pilot symbols into the communication block results in a communication overhead, which degrades the transmission rate of the useful information.
\\\indent In this paper, we offer a solution that removes the feedback loop and the resulting signalling overhead for vehicular beamtracking. Our work is based on a novel predictive beamforming design for the multi-user V2I network by relying on the DFRC techniques. To be specific, we employ DFRC signals in the downlink transmission, where the echo signals reflected by the vehicles are exploited for tracking and localization. In other words, the whole downlink block is jointly used both as radar sensing signals and communication data symbols. As a result, no downlink pilots are needed, and the matched-filtering operation for the echo signals would bring significant gain in the signal-to-noise ratio (SNR). Following the spirit of joint sensing and communication, we consider an extended Kalman filtering (EKF) scheme for tracking and predicting the angle, the distance and the velocity under the kinematic model of each vehicle. Note here that while the angle information is used for downlink beamforming, both the angle and the distance need to be tracked for the purpose of localizing the vehicles. To further improve the localization accuracy while guaranteeing the communication QoS, we propose a novel power allocation method for multiple vehicles, which is able to minimize the estimation errors while ensuring the downlink sum-rate constraint. For clarity, we summarize our contribution as follows:
\begin{itemize}
  \item We propose a novel DFRC based framework for joint sensing and communication in the V2I network, which requires no downlink pilots in contrast to its communication-only feedback based counterpart;
  \item We propose a novel EKF method that is able to accurately track and predict the state of the vehicle;
  \item We propose a novel power allocation scheme aiming at improving the sensing performance while guaranteeing the downlink sum-rate for multiple vehicles, which achieves a favorable performance tradeoff between sensing and communication.
\end{itemize}

The remainder of this paper is organized as follows, Section II introduces the system model, Section III describes the proposed EKF approach, Section IV proposes the power allocation design for multiple vehicles, Section V provides the numerical results, and finally Section VI concludes the paper.
\\\indent {\emph{Notation}}: Unless otherwise specified, matrices are denoted by bold uppercase letters (i.e., $\mathbf{A}$), vectors are represented by bold lowercase letters (i.e., $\mathbf{f}$), and scalars are denoted by normal font (i.e., $\theta$). Subscripts indicate the indexes of the time-slot and the vehicle, (i.e., $\theta_{k,n}$ denotes the angle of the $k$th vehicle at the $n$th epoch).

\section{System Model}
We consider a mmWave mMIMO RSU with a uniform linear array (ULA), which serves $K$ vehicles on the road as depicted in Fig. 1. To communicate with the RSU, each vehicle is also equipped with an MIMO array at both sides of the body. For notational simplicity, and without loss of generality, we assume that the vehicles are driving along a straight road that is parallel to the antenna array of the RSU, and that the RSU communicates with each vehicle via a LoS channel. The discussion of NLoS channels is designated to our future work. In what follows, we will firstly introduce the general framework, and then the detailed signal model.
\\\indent \emph{Remark 1:} Note that the ULA of the RSU can be adjusted to be paralleled to the road, where small mismatches are allowed. In fact, alternative relative directions can be straightforwardly accommodated by adding a fixed offset to the tracked angles. We note here that this offset can be easily calibrated since it is fixed and is known to the RSU. As a result, our proposed techniques can be applied without any changes.
\begin{figure}[!t]
    \centering
    \includegraphics[width=0.7\columnwidth]{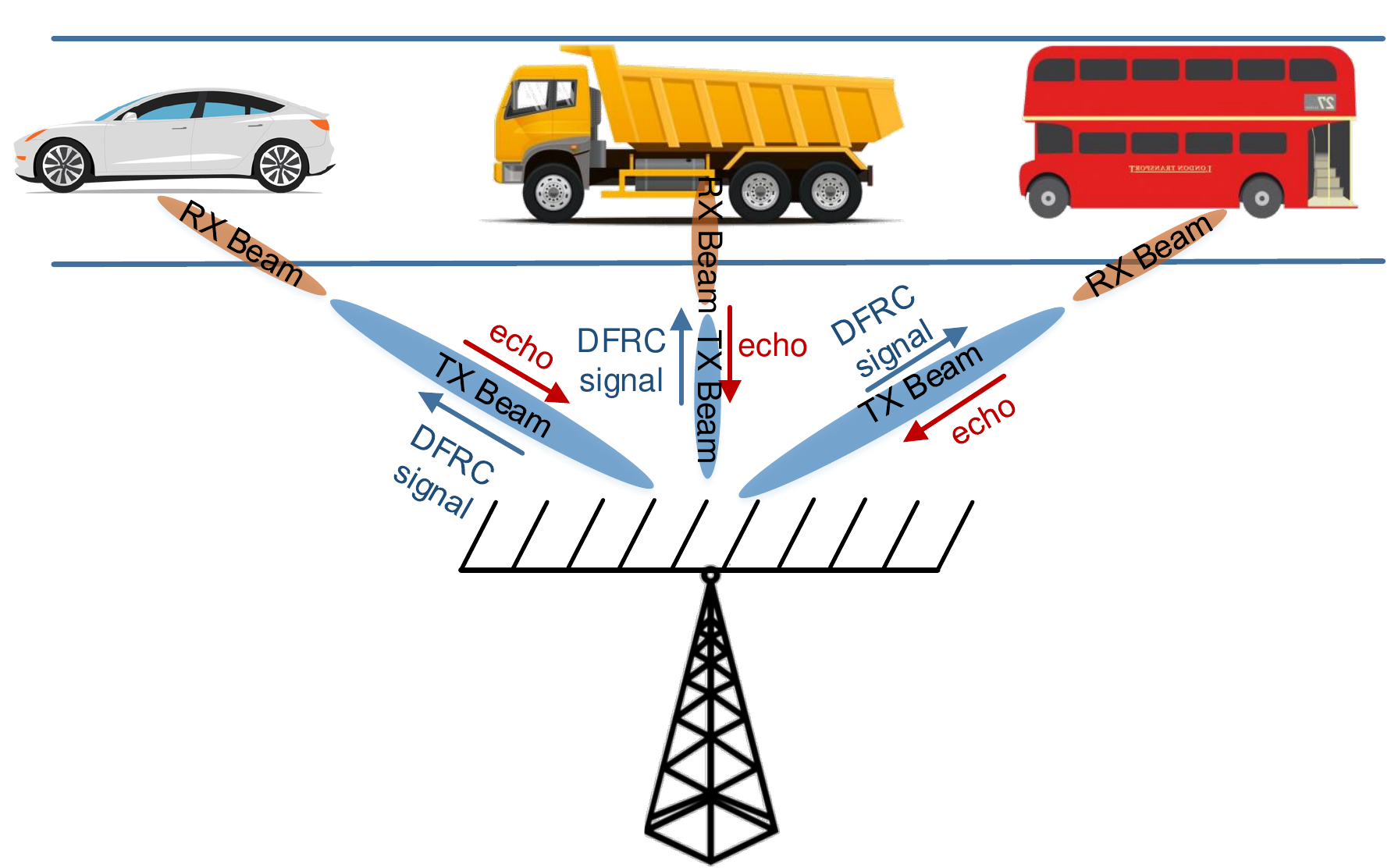}
    \caption{V2I scenario model.}
    \label{fig:1}
\end{figure}
\subsection{The General Framework}
To establish reliable communication links, the RSU needs to acquire the accurate information for the azimuth angle of the vehicles. On the other hand, it is also necessary to have the knowledge of RSU's relative angle at each vehicle. By doing so, both the RSU and the vehicle could use their antenna arrays to formulate narrow beams that could accurately point to the directions of each other. Conventionally, the beam alignment is done in a \emph{scanning} mechanism, i.e., periodically transmitting and receiving pilots at all the possible beams, and thus to search for the beam pair that gives the strongest path gain \cite{7914742}. Nevertheless, such a scheme will inevitably cause considerable latency and communication overhead since both pilots and feedbacks between TX and RX are required. Given the high mobility of the vehicles, it is important to efficiently track or even predict the variation of all the angular parameters. Moreover, the need for high-accuracy localization requires the knowledge of the distance of each vehicle in addition to its angular counterpart. To accomplish both sensing and communication tasks, we propose in the following a framework based on the DFRC technique.
\\\indent Let us denote the angle, the distance and the velocity of the \emph{k}th vehicle relative to the RSU's array as $\theta_k\left(t\right)$, $d_k\left(t\right)$, $v_k\left(t\right)$, respectively. Further, the angle of the RSU relative to the $k$th vehicle is denoted as $\phi_k\left(t\right)$. Note that all the parameters are functions of time $t \in \left[0,T\right]$, with $T$ being the maximum time duration of interest. It then follows that $\phi_k\left(t\right) = \theta_k\left(t\right), \forall k$, given the parallel driving directions of the vehicles relative to the RSU's antenna array. We therefore omit $\phi_k$ in the remainder of the paper. For notational convenience, we discretize the time period $T$ into several small time-slots with a length of $\Delta T$, and denote $\theta_{k,n}$, $d_{k,n}$ and $v_{k,n}$ as the motion parameters at the \emph{n}th epoch for each vehicle. Following the standard assumption in the literature \cite{7905941,8025577,8851151,8830375}, we assume that the motion parameters keep constant within $\Delta T$.
\\\indent \emph{1) Initial Estimation}
\\\indent Our proposed scheme is initialized by letting the RSU estimate the parameters of the vehicles that enter into the coverage of interest. In this stage, the RSU can either act as a pure mono-static radar, which infers the initial vehicle parameters $\theta_{k,0}$, $d_{k,0}$ and $v_{k,0}$ from the reflected echoes, or to obtain these estimates simply via conventional uplink training. Here we note that while the RSU is only able to attain the radial velocity $v_{k,n}^{R}$ by estimating the Doppler frequency, it can infer the overall velocity as $v_{k,n} = v_{k,n}^{R}/\cos\theta_{k,n}$.
\\\indent\emph{2) State Prediction}
\\\indent With the estimates of the motion parameters $\hat \theta_{k,n-1}$, $\hat{d}_{k,n-1}$ and ${\hat v}_{k,n-1}$ at the $\left(n-1\right)$th epoch, the RSU performs one- and two-step predictions of the angle parameters, respectively. For the purpose of sensing, the RSU will also need to perform one-step prediction for other motion parameters, i.e., distance and velocity. At the $n$th epoch, the RSU formulates \emph{K} transmit beams towards \emph{K} vehicles by using the one-step predictions ${\hat \theta _{k,{n\left| {n - 1} \right.}}},\forall k$. In each of the beams, the RSU will send a joint radar-communication signal that contains the information of the two-step predictions ${{\hat \theta }_{k,n + 1\left| {n - 1} \right.}},\forall k$. Once the vehicles receive the information, they will correspondingly formulate receive beams at the $\left(n+1\right)$th epoch based on the predicted angles. The reason for using the two-step prediction at the vehicle is that the one-step predicted angle ${\hat \theta _{k,{n\left| {n - 1} \right.}}}$ would be outdated at the $\left(n+1\right)$th epoch. Note that the predictions are performed by using the kinematic equations of the vehicles. The transmit beams of the RSU and the receive beams of the vehicles will be aligned with each other if the estimation and prediction are sufficiently accurate.
\\\indent\emph{3) Vehicle Tracking}
\\\indent At the $n$th epoch, the signal transmitted by the RSU via each beam is partially reflected by the body of the corresponding vehicle, and is also partially received by the vehicle's antenna array. As discussed above, for each vehicle, the data sequence received contains the predicted angular information for the $\left(n+1\right)$th epoch, which will be exploited for receive beamforming at the vehicles. On the other hand, the RSU receives the echoes reflected by all vehicles, and estimates $\theta_{k,n},v_{k,n}$ and $d_{k,n}$, which are used to refine the predicted parameters at the $n$th epoch. The refined state parameters are then used as the inputs of the predictor for the $\left(n+1\right)$th and $\left(n+2\right)$th epoches at the RSU.
\\\indent For clarity, we summarize the above procedure in Fig. 1 and Fig. 2. It can be observed that by iteratively performing beam prediction and beam tracking, the RSU is able to serve multiple vehicles simultaneously. Moreoever, with the aid of the radar functionality built in the RSU, one can avoid frequent feedbacks between the RSU and vehicles. This is evidently shown in Fig. 2 that the uplink feedback from the vehicles to the RSU are replaced by the echo signal. In this sense, the beam information can be extracted by the echo signal, and all the uplink resources can be used to transfer useful data rather than the feedback information.

\subsection{Signal Model}
Based on the above discussion, it is clear that the dominating complexity in the signal processing is at the RSU's side. Moreover, the initial estimation can be simply done by conventional radar signal processing or uplink beam training. Given the aforementioned reasons, we will focus on the prediction and tracking stages at the RSU's side. In this subsection, we develop the measurement model at the RSU by using radar signal processing techniques.
\\\indent \emph{1) Radar Signal Model}
\\\indent Let us denote the $K$ downlink DFRC streams transmitted at the $n$th epoch and time $t$ as ${{\mathbf{s}}_n}\left( t \right) = {\left[ {{s_{1,n}}\left( t \right), \ldots ,{s_{K,n}}\left( t \right)} \right]^T} \in {\mathbb{C}^{K \times 1}}$. The transmitted signal can be expressed as
\begin{equation}\label{eq1}
  {{\mathbf{\tilde s}}}_n\left(t\right) = {\mathbf{F}}_n{\mathbf{s}}_n\left(t\right) \in \mathbb{C}^{N_t \times 1},
\end{equation}
where ${\mathbf{F}}_n \in \mathbb{C}^{N_t \times K}$ is the transmit beamforming matrix, with $N_t$ being the number of transmit antennas. Accordingly, the reflected echoes received at the RSU can be given in the form
\begin{equation}\label{eq2}
\begin{gathered}
  {{\mathbf{r}}_n}\left( t \right) = \hfill \\
  \kappa\sum\limits_{k = 1}^K \sqrt {p_{k,n}}{{\beta _{k,n}}{e^{j{2\pi \mu_{k,n}}t}}{\mathbf{b}}\left( {{\theta _{k,n}}} \right){{\mathbf{a}}^H}\left( {{\theta _{k,n}}} \right)} {\mathbf{\tilde s}}_n\left( {t - {\tau _{k,n}}} \right) \hfill \\
  + {{\mathbf{z}}_r}\left( t \right), \hfill \\
\end{gathered}
\end{equation}
where $p_{k,n}$ is the transmit power at the $k$th beam and the $n$th epoch, $\kappa = \sqrt{N_tN_r}$ is the array gain factor, with $N_r$ being the number of receive antennas, $\mathbf{z}_r\left( t \right) \in \mathbb{C}^{N_r \times 1}$ represents the complex additive white Gaussian noise with zero mean and variance of $\sigma^2$, $\beta_{k,n}$, $\mu_{k,n}$ and $\tau_{k,n}$ denote the reflection coefficient, the Doppler frequency and the time-delay for the $k$th vehicle. Given the distance $d_{k,n}$, the reflection coefficient for the $k$th vehicle can be expressed as
\begin{equation}\label{eq3}
  {\beta _{k,n}} = {\varepsilon _{k,n}}{\left( {2{d_{k,n}}} \right)^{ - 1}},
\end{equation}
where $\varepsilon _{k,n}$ is the complex radar cross-section (RCS) of the $k$th vehicle at the $n$th epoch. We assume that the RCS of the vehicle keeps constant during the period $T$, i.e., ${\varepsilon _{k,n}} = {\varepsilon _{k,n-1}} = ...= {\varepsilon _{k,0}}, \forall n$, which corresponds to a Swerling I target \cite{richards2005fundamentals}. In (\ref{eq2}), ${\mathbf{a}}\left( {{\theta}} \right)$ and ${\mathbf{b}}\left( {{\theta}} \right)$ are transmit and receive steering vectors of the antenna array of the RSU, which are expressed as
\begin{equation}\label{eq4}
{\mathbf{a}}\left( \theta  \right) = \sqrt{\frac{1}{N_t}}{\left[ {1,{e^{-j\pi \cos \theta }},...,{e^{-j\pi \left( {{N_t} - 1} \right)\cos \theta }}} \right]^T},
\end{equation}
\begin{equation}\label{eq5}
{\mathbf{b}}\left( \theta  \right) = \sqrt{\frac{1}{N_r}}{\left[ {1,{e^{-j\pi \cos \theta }},...,{e^{-j\pi \left( {{N_r} - 1} \right)\cos \theta }}} \right]^T},
\end{equation}
where we assume half-wavelength antenna spacing for the ULA.
\\\indent The beamforming matrix $\mathbf{F}_n$ is designed based on the prediction of the angles. The $k$th column of $\mathbf{F}_n$ is given as
\begin{equation}\label{eq6}
  {{\mathbf{f}}_{k,n}} = \mathbf{a}\left( {{{\hat \theta }_{k,{n\left| {n - 1} \right.}}}} \right),\forall k,
\end{equation}
where ${{{\hat \theta }_{k,{n\left| {n - 1} \right.}}}} $ is the one-step predicted angle for the $k$th vehicle at the $n$th epoch. By employing the design in (\ref{eq6}), the RSU will formulate $K$ beams towards the predicted directions to track the vehicles.
\\\indent \emph{2) Radar Measurement Model}
\\\indent By using the massive MIMO array at the RSU, the beams formulated will be sufficiently narrow, such that the inter-beam interference can be omitted. This follows from the established mMIMO theory, and can be mathematically expressed by the following lemma.
\begin{lemma}
  For uniform linear array, we have $\left| {{{\mathbf{a}}^H}\left( {{\theta}} \right){\mathbf{a}}\left( {{\phi}} \right)} \right| \to 0, \forall \theta \ne \phi$, ${N_t} \to \infty$.
\end{lemma}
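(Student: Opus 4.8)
The plan is to evaluate $\mathbf{a}^H(\theta)\mathbf{a}(\phi)$ in closed form by exploiting the geometric-series (Dirichlet-kernel) structure of the ULA steering vector in \eqref{eq4}, and then to bound its magnitude and let $N_t\to\infty$. First I would substitute the definition of $\mathbf{a}(\cdot)$ and take the Hermitian conjugate, so that the $m$th entry of $\mathbf{a}^H(\theta)$ is $\sqrt{1/N_t}\,e^{j\pi m\cos\theta}$. The inner product then collapses to a single scalar sum,
\begin{equation}
\mathbf{a}^H(\theta)\mathbf{a}(\phi) = \frac{1}{N_t}\sum_{m=0}^{N_t-1} e^{j\pi m(\cos\theta - \cos\phi)}.
\end{equation}

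Next I would introduce the shorthand $\psi = \pi(\cos\theta - \cos\phi)$ and sum the geometric series. Provided $\psi$ is not an integer multiple of $2\pi$, this yields $\sum_{m=0}^{N_t-1} e^{jm\psi} = (e^{jN_t\psi}-1)/(e^{j\psi}-1)$, whose magnitude is the familiar Dirichlet ratio $|\sin(N_t\psi/2)|/|\sin(\psi/2)|$. Combining with the $1/N_t$ prefactor gives
\begin{equation}
\bigl| \mathbf{a}^H(\theta)\mathbf{a}(\phi) \bigr| = \frac{1}{N_t}\cdot\frac{\left|\sin(N_t\psi/2)\right|}{\left|\sin(\psi/2)\right|} \le \frac{1}{N_t\,\left|\sin(\psi/2)\right|}.
\end{equation}
Since the denominator $|\sin(\psi/2)|$ is a fixed positive constant independent of $N_t$ while the numerator is bounded by unity, the right-hand side vanishes as $N_t\to\infty$, which establishes the claim.

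The one point that genuinely requires care — and which I regard as the main obstacle — is guaranteeing that $\sin(\psi/2)$ is bounded away from zero, i.e.\ that $\psi \not\equiv 0 \pmod{2\pi}$. The hypothesis $\theta\ne\phi$ rules out $\psi=0$ only through the injectivity of $\cos$ on the physical angular sector, so here I would invoke that a ULA resolves angles in $[0,\pi]$, on which $\cos$ is strictly monotone; hence $\theta\ne\phi \Rightarrow \cos\theta\ne\cos\phi \Rightarrow \psi\ne 0$. I would also observe that $\cos\theta-\cos\phi\in(-2,2)$ confines $\psi$ to $(-2\pi,2\pi)$, keeping it away from the remaining troublesome values $\pm 2\pi$ except at the degenerate endpoints $\{0,\pi\}$; thus on the open sector the denominator is strictly positive and uniformly bounded below. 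Once this nondegeneracy is settled the limit is immediate, and the remainder of the argument is routine manipulation of the finite geometric sum.
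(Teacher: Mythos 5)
Your argument is correct. The paper itself offers no derivation for this lemma --- it simply defers to the massive-MIMO literature (the citation to Ngo's work) --- whereas you supply the explicit computation. Your route is the standard one that the cited reference relies on: writing the inner product as $\frac{1}{N_t}\sum_{m=0}^{N_t-1}e^{jm\psi}$ with $\psi=\pi(\cos\theta-\cos\phi)$, summing the geometric series to get the Dirichlet ratio, and bounding it by $\bigl(N_t\left|\sin(\psi/2)\right|\bigr)^{-1}\to 0$. What your write-up buys over the paper's is self-containment, and in particular you make explicit the one genuinely delicate point that a bare citation hides: the conclusion requires $\psi\not\equiv 0\pmod{2\pi}$, which holds because $\cos$ is injective on the physical sector $[0,\pi]$ and because $\psi=\pm 2\pi$ can only occur at the endfire pair $\{\theta,\phi\}=\{0,\pi\}$, a degenerate case the lemma implicitly excludes (and one that is irrelevant here, since the vehicles' angles lie strictly inside the sector). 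One minor caution: your final bound $\bigl(N_t\left|\sin(\psi/2)\right|\bigr)^{-1}$ gives pointwise convergence for each fixed pair $\theta\ne\phi$, which is all the lemma claims; it is not uniform over the sector (the denominator degenerates as $\phi\to\theta$), so one should not read more into it than the stated pointwise limit. As written, the proof is complete and slightly more informative than the paper's.
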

\renewcommand{\qedsymbol}{$\blacksquare$}
\begin{proof}
  See \cite{ngo2015massive}.
\end{proof}
Lemma 1 suggests that the steering vectors are asymptotically orthogonal to each other under the massive MIMO regime. As a result, the reflected echoes from different vehicles will not interfere with each other, and the RSU can thus process each echo signal individually. For the $k$th vehicle, the received echo at its associated beam is given by
\begin{equation}\label{eq7}
\begin{gathered}
{{\mathbf{r}}_{k,n}}\left( t \right) = \hfill \\
\kappa\sqrt {p_{k,n}}{\beta _{k,n}}{e^{j{2\pi \mu_{k,n}}t}}{\mathbf{b}}\left( {{\theta _{k,n}}} \right){{\mathbf{a}}^H}\left( {{\theta _{k,n}}} \right){{\mathbf{f}}_{k,n}}{{s}}_{k,n}\left( {t - {\tau _{k,n}}} \right) \hfill \\
+ \mathbf{z}_{k,n}\left( t \right). \hfill \\
\end{gathered}
\end{equation}
In (\ref{eq7}), the transmit signal-to-noise ratio (SNR) is defined as $ \frac{{{p_{k,n}}}}{{{\sigma ^2}}}$. By matched-filtering (\ref{eq7}) with a delayed and Doppler-shifted version of ${{s}}_{k,n}\left( {t} \right)$, one can estimate the delay $\tau_{k,n}$ and the Doppler frequency $\mu_{k,n}$. Compensating (\ref{eq7}) using these estimates yields the measurement model for the angle $\theta_{k,n}$ and the reflection coefficient $\beta_{k,n}$ as
\begin{equation}\label{eq8}
\begin{gathered}
  {{\mathbf{\tilde r}}_{k,n}} = \kappa{\beta _{k,n}}{\mathbf{b}}\left( {{\theta _{k,n}}} \right){{\mathbf{a}}^H}\left( {{\theta _{k,n}}} \right){{\mathbf{f}}_{k,n}} + {{\mathbf{z}}_{\theta}} \hfill \\
   = \kappa{\beta _{k,n}}{\mathbf{b}}\left( {{\theta _{k,n}}} \right){{\mathbf{a}}^H}\left( {{\theta _{k,n}}} \right){\mathbf{a}}\left( {{{\hat \theta }_{k,n\left| {n - 1} \right.}}} \right) + {{\mathbf{z}}_{\theta}}, \hfill \\
\end{gathered}
\end{equation}
where ${\mathbf{z}}_{\theta}$ denotes the measurement noise normalized by the transmit power $p_{k,n}$ and the matched-filtering gain $G$, with zero mean and variance of $\sigma_1^2$. Note here that $G$ is the SNR gain brought by the matched-filtering operation, which typically equals to the energy of $s_{k,n}\left(t\right)$. Furthermore, the measurement models of the distance $d_{k,n}$ and the velocity $v_{k,n}$ are given as
\begin{equation}\label{eq9}
  {\tau _{k,n}} = \frac{{2{d_{k,n}}}}{c} + {z_\tau },
\end{equation}
\begin{equation}\label{eq10}
  {\mu_{k,n}} = \frac{{2{v_{k,n}}\cos {\theta _{k,n}}{f_c}}}{c} + {z_f},
\end{equation}
where $f_c$ and $c$ represent the carrier frequency and the speed of light, respectively, ${z_\tau }$ and $z_f$ denote the measurement Gaussian noise with zero mean and variance of $\sigma_2^2$ and $\sigma_3^2$, respectively. Note that the round-trip is twice the distance from the RSU to the vehicle, and the Doppler frequency relies on the radial velocity ${v_{k,n}}\cos {\theta _{k,n}}$. Moreover, we remark here that the variances of the measurement noises are inversely proportional to the receive SNR of (\ref{eq7}) \cite{kay1998fundamentals}, i.e.,
\begin{equation}
  \sigma _1^2 \propto \frac{{{\sigma ^2}}}{{G{p_{k,n}}}},\sigma _i^2 \propto \frac{{{\sigma ^2}}}{{G{\kappa ^2}{{\left| {{\beta _{k,n}}} \right|}^2}{\left|\delta _{k,n}\right|}^2{p_{k,n}}}},i = 2,3,
\end{equation}
where ${\delta _{k,n}} = {{\mathbf{a}}^H}\left( {{\theta _{k,n}}} \right){\mathbf{a}}\left( {{{\hat \theta }_{k,n\left| {n - 1} \right.}}} \right)$ represents the beamforming gain factor, whose modulus equals to 1 if the predicted angle perfectly matches the real angle, and is less than 1 otherwise. For convenience, we assume that
\begin{equation}
  \sigma _1^2 = \frac{{{a_1^2}{\sigma ^2}}}{{G{p_{k,n}}}},\sigma _i^2 = \frac{{{a_i^2}{\sigma ^2}}}{{G{\kappa ^2}{{\left| {{\beta _{k,n}}} \right|}^2}{\left|\delta _{k,n}\right|}^2{p_{k,n}}}},i = 2,3.
\end{equation}
Note that $\sigma_2^2$ and $\sigma_3^2$ are determined by the transmit power $p_{k,n}$, the matched filtering gain $G$, the array gain $\kappa$, the beamforming gain ${\delta _{k,n}}$ as well as the strength of the reflected signal. Nevertheless, $\sigma _1^2$ is only determined by the transmit power $p_{k,n}$ and the matched filtering gain $G$, since $\kappa$, $\beta_{k,n}$ and ${\delta _{k,n}}$ are already contained in (\ref{eq8}). Finally, $a_i, i= 1,2,3$ are constants related to the system configuration, signal designs as well as the specific signal processing algorithms.
\\\indent \emph{3) Communication Model}
\\\indent As shown in Fig. 2, at the $n$th epoch, the $k$th vehicle receives the signal from the RSU by using a receive beamformer $\mathbf{w}_{k,n}$, yielding
\begin{equation}\label{eq11}
\begin{gathered}
  {c_{k,n}}\left( t \right) = \hfill \\
    {\tilde\kappa}\sqrt{p_{k,n}}{\alpha _{k,n}}{\mathbf{w}}_{k,n}^H{\mathbf{u}}\left( {{\theta _{k,n}}} \right){{\mathbf{a}}^H}\left( {{\theta _{k,n}}} \right){{\mathbf{f}}_{k,n}}{s_{k,n}}\left( t \right) + {z_c}\left( t \right), \hfill \\
\end{gathered}
\end{equation}
where ${{z}_c}\left( t \right)$ is the zero-mean Gaussian noise with variance ${\sigma_C^2}$, ${s}_{k,n}\left(t\right)$ denotes the DFRC stream transmitted from the RSU to the $k$th vehicle, ${\alpha _{k,n}}$ denotes the communication channel coefficient, which is different from the radar reflection coefficient ${\beta _{k,n }}$, ${\mathbf{u}}\left(\theta\right)$ represents the steering vector of the vehicle's antenna array, and is similarly defined as in (\ref{eq3}) and (\ref{eq4}) with $M_k$ antennas. For notational simplicity, let us assume $M_1 = ... = M_K = M$. Again, ${\tilde\kappa} = \sqrt{N_tM}$ is the array gain factor. Note that the inter-vehicle interference vanishes thanks to the narrow beam generated by the mMIMO array.
\\\indent As discussed in the above, the receive beamformer should be formulated based on the two-step prediction of the angle parameter, since the one-step predicted information would be outdated for receive beamforming at the vehicle. This is expressed as
\begin{equation}\label{eq12}
{{\mathbf{w}}_{k,n}} = {\mathbf{u}}\left( {{{\hat\theta} _{k,n\left| {n - 2} \right.}}} \right).
\end{equation}
Assume that the DFRC stream ${s}_{k,n}\left(t\right)$ has a unit power, then the receive signal-to-noise ratio (SNR) for the $k$th vehicle at the $n$th epoch is obtained as
\begin{equation}\label{eq13}
\begin{gathered}
  {\operatorname{SNR} _{k,n}} = \frac{{p_{k,n}}{{{\left| {\tilde\kappa}{{\alpha _{k,n}}{\mathbf{w}}_{k,n}^H{\mathbf{u}}\left( {{\theta _{k,n}}} \right){{\mathbf{a}}^H}\left( {{\theta _{k,n}}} \right){{\mathbf{f}}_{k,n}}} \right|}^2}}}{{{{\sigma_C^2}}}} \hfill \\
   = {{{p_{k,n}}{\rho_{k,n}}}},\hfill \\
\end{gathered}
\end{equation}
where
\begin{equation}\label{eq14}
{\rho _{k,n}} = {{{{\left| \begin{gathered}
  {\tilde\kappa}{\alpha _{k,n}}{{\mathbf{u}}^H}\left( {{{\hat \theta }_{k,n\left| {n - 2} \right.}}} \right){\mathbf{u}}\left( {{\theta _{k,n}}} \right) \hfill \\
   \cdot {{\mathbf{a}}^H}\left( {{\theta _{k,n }}} \right){\mathbf{a}}\left( {{{\hat \theta }_{k,n\left| n-1 \right.}}} \right) \hfill \\
\end{gathered}  \right|}^2}} \mathord{\left/
 {\vphantom {{{{\left| \begin{gathered}
  {\alpha _{k,n}}{{\mathbf{u}}^H}\left( {{{\hat \theta }_{k,n\left| {n - 2} \right.}}} \right){\mathbf{u}}\left( {{\theta _{k,n}}} \right) \hfill \\
   \cdot {{\mathbf{a}}^H}\left( {{\theta _{k,n}}} \right){\mathbf{a}}\left( {{{\hat \theta }_{k,n\left| n-1 \right.}}} \right) \hfill \\
\end{gathered}  \right|}^2}} {{{\sigma_C^2}}}}} \right.
 \kern-\nulldelimiterspace} {{{\sigma_C^2}}}}.
\end{equation}
The achievable sum-rate of all the $K$ vehicles is thus given as
\begin{equation}\label{eq15}
  {R_{n}} = \sum\limits_{k = 1}^K {{{\log }_2}\left( {1 + {\operatorname{SNR} _{k,n }}} \right)}
   = \sum\limits_{k = 1}^K {{{\log }_2}\left( {1 + {{{p_{k,n}}{\rho _{k,n}}}}} \right)}.
\end{equation}
Following the standard assumption in the literature, the LoS channel coefficient $\alpha_{k,n}$ is given as \cite{8675440}
\begin{equation}\label{eq16}
  {\alpha _{k,n}} = {\tilde \alpha}d_{k,n}^{ - 1}{e^{j\frac{{2\pi }}{\lambda }{d_{k,n}}}} = {{\tilde \alpha}}d_{k,n}^{ - 1}{e^{j\frac{{2\pi {f_c}}}{c}{d_{k,n}}}},
\end{equation}
where ${\tilde \alpha}d_{k,n}^{ - 1}$ is the path-loss of the channel, with ${\tilde \alpha}$ being the channel power gain at the reference distance $d_0 = 1\text{m}$, $\frac{{2\pi }}{\lambda }{d_{k,n}}$ is the phase of the LoS channel, with $\lambda = \frac{{{f_c}}}{c}$ being the wavelength of the signal. The reference power gain factor ${\tilde \alpha}$ is assumed to be known to the RSU. Therefore, to estimate ${\alpha _{k,n}}$ is equivalent to estimating $d_{k,n}$.
\begin{figure}[!t]
    \centering
    \includegraphics[width=0.7\columnwidth]{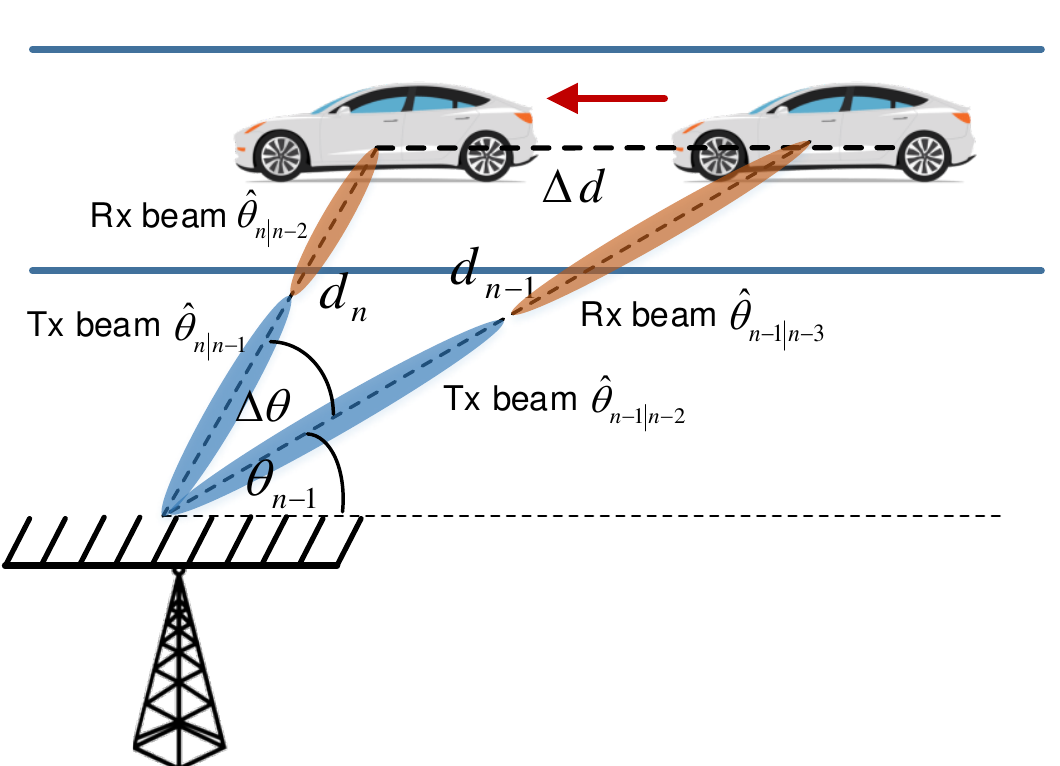}
    \caption{V2I state evolution model.}
    \label{fig:2}
\end{figure}
\subsection{State Evolution Model}
Our goal is to track the variation of the angles and distances of $K$ vehicles by processing the measured signals in (\ref{eq8}), (\ref{eq9}) and (\ref{eq10}), which are determined by the kinematic equations of the vehicles. According to the above discussion, the RSU can process the echo signal reflected by each vehicle individually, given the orthogonality among $K$ vehicles from Lemma 1. We therefore focus on a simple scenario where the RSU is serving a single vehicle, and omit the subscript $k$ for notational convenience. At the $n$th epoch, the angle, the distance, the velocity and the reflection coefficient of the vehicle are denoted as $\theta_n$, $d_n$, $v_n$ and $\beta_n$ for simplicity. In the following, we derive the state evolution model of the vehicle. First of all, based on the geometric relations shown in Fig. 2, we have
\begin{equation}\label{eq17}
\left\{ \begin{gathered}
  d_n^2 = d_{n - 1}^2 + \Delta {d^2} - 2{d_{n - 1}}\Delta d\cos {\theta _{n - 1}}, \hfill \\
  \frac{{\Delta d}}{{\sin \Delta \theta }} = \frac{{{d_n}}}{{\sin {\theta _{n - 1}}}}, \hfill \\
\end{gathered}  \right.
\end{equation}
where $\Delta d = {v_{n-1}}\Delta T$, $\Delta \theta = \theta_n - \theta_{n-1}$. It is challenging to analyze the evolution model directly given the high non-linear nature of (\ref{eq17}). Therefore, we propose an accurate approximation of (\ref{eq17}) in the following. Note that one can rewrite the first equation in (\ref{eq17}) as
\begin{equation}\label{eq18}
\begin{gathered}
  d_n^2 - d_{n - 1}^2 = \left( {{d_n} + {d_{n - 1}}} \right)\left( {{d_n} - {d_{n - 1}}} \right) \hfill \\
   = \Delta {d^2} - 2{d_{n - 1}}\Delta d\cos {\theta _{n - 1}}. \hfill \\
\end{gathered}
\end{equation}
Hence we have
\begin{equation}\label{eq19}
\begin{gathered}
  {d_n} - {d_{n - 1}} = \frac{{\Delta {d^2} - 2{d_{n - 1}}\Delta d\cos {\theta _{n - 1}}}}{{{d_n} + {d_{n - 1}}}} \hfill \\
   \approx \frac{{\Delta {d^2} - 2{d_{n - 1}}\Delta d\cos {\theta _{n - 1}}}}{{2{d_{n - 1}}}} = \Delta d\left( {\frac{{\Delta d}}{{2{d_{n - 1}}}} - \cos {\theta _{n - 1}}} \right), \hfill \\
\end{gathered}
\end{equation}
where the approximation in (\ref{eq19}) is based on the fact that the position of the vehicle will not change too much in such a short time duration. To verify this, let us consider a simple example where a vehicle travels at a speed of $v = 54\text{km/h} = 15\text{m/s}$, within a short period $\Delta T = 10\text{ms}$, the traveling distance is $\Delta d = 0.15\text{m}$, which can be negligible comparing to the typical distance from the vehicle to the RSU, e.g., tens or hundreds of meters. Accordingly, ${\frac{{\Delta d}}{{2{d_{n - 1}}}}}$ becomes negligible in (\ref{eq19}), which yields
\begin{equation}\label{eq20}
  {d_n} \approx {d_{n - 1}} - \Delta d\cos {\theta _{n - 1}} = {d_{n - 1}} - {v_{n - 1}}\Delta T\cos {\theta _{n - 1}}.
\end{equation}
We further note that when $\Delta \theta$ is small, one can approximate $\sin\Delta\theta \approx \Delta\theta$. This gives us
\begin{equation}\label{eq21}
  \Delta \theta  \approx \sin \Delta \theta  = \frac{{\Delta d\sin {\theta _{n - 1}}}}{{{d_n}}}.
\end{equation}
Substituting (\ref{eq20}) into (\ref{eq21}) leads to
\begin{equation}\label{eq22}
  \Delta \theta  \approx \frac{{\Delta d\sin {\theta _{n - 1}}}}{{{d_{n - 1}} - \Delta d\cos {\theta _{n - 1}}}} = \frac{{\tan {\theta _{n - 1}}}}{{\frac{{{d_{n - 1}}}}{{\Delta d\cos {\theta _{n - 1}}}} - 1}}.
\end{equation}
Again, by using the property $\Delta d \ll {d_{n - 1}}$, (\ref{eq22}) can be further simplified as
\begin{equation}\label{eq23}
  \Delta \theta  \approx \frac{{\Delta d\cos {\theta _{n - 1}}\tan {\theta _{n - 1}}}}{{{d_{n - 1}}}} = \frac{{\Delta d\sin {\theta _{n - 1}}}}{{{d_{n - 1}}}}.
\end{equation}
Hence we have
\begin{equation}\label{eq24}
{\theta _n} \approx {\theta _{n - 1}} + \frac{{\Delta d\sin {\theta _{n - 1}}}}{{{d_{n - 1}}}} = {\theta _{n - 1}} + d_{n - 1}^{ - 1}{v_{n - 1}}\Delta T\sin {\theta _{n - 1}}.
\end{equation}
By assuming that the vehicle is moving at an approximately constant speed, we have
\begin{equation}\label{eq25}
  v_n \approx v_{n-1}.
\end{equation}
We then analyze the evolution of the reflection coefficient $\beta$. Recalling the definition in (\ref{eq3}), it is straightforward to see
\begin{equation}\label{eq26}
  {\beta _n} = {\varepsilon _n}{\left( {2{d_n}} \right)^{ - 1}},{\beta _{n - 1}} = {\varepsilon _{n - 1}}{\left( {2{d_{n - 1}}} \right)^{ - 1}},
\end{equation}
where ${\varepsilon _n}$ and ${\varepsilon _{n-1}}$ denote the RCS at the $n$th and $\left(n-1\right)$th time epoches, respectively. Recalling the constant RCS assumption, it follows from (\ref{eq20}) and (\ref{eq26}) that
\begin{equation}\label{eq27}
\begin{gathered}
  {\beta _n} = {\beta _{n - 1}} \cdot \frac{{{\varepsilon _n}{d_{n - 1}}}}{{{\varepsilon _{n - 1}}{d_n}}} = {\beta _{n - 1}}\frac{{{d_{n - 1}}}}{{{d_n}}} \hfill \\
   \approx {\beta _{n - 1}}\left( {1 + \frac{{\Delta d\cos {\theta _{n - 1}}}}{{{d_n}}}} \right) \approx {\beta _{n - 1}}\left( {1 + \frac{{\Delta d\cos {\theta _{n - 1}}}}{{{d_{n - 1}}}}} \right). \hfill \\
\end{gathered}
\end{equation}
Finally, we summarize the state evolution model as
\begin{equation}\label{eq29}
\left\{ \begin{gathered}
  {\theta _n} = {\theta _{n - 1}} + d_{n - 1}^{ - 1}{v_{n - 1}}\Delta T\sin {\theta _{n - 1}} + {\omega _\theta }, \hfill \\
  {d_n} = {d_{n - 1}} - {v_{n - 1}}\Delta T\cos {\theta _{n - 1}} + {\omega _d}, \hfill \\
  {v_n} = {v_{n - 1}} + {\omega _v}, \hfill \\
  {\beta _n} = {\beta _{n - 1}}\left( {1 + d_{n - 1}^{ - 1}{v_{n - 1}}\Delta T\cos {\theta _{n - 1}}} \right) + {\omega _\beta }, \hfill \\
\end{gathered}  \right.
\end{equation}
where $\omega_\theta$, $\omega_d$, $\omega_v$ and ${\omega_{\beta} }$ denote the corresponding noises, which are assumed to be zero-mean Gaussian distributed with variances of ${\sigma _\theta^2,\sigma _d^2,\sigma _v^2}$ and $\sigma_{\beta}^2$, respectively. We highlight here that these noises are generated by approximation and other systematic errors, which are irrelevant to the measurement SNR defined in the last subsection.
\\\indent \emph{Remark 2:} To verify the performance of the approximations of $d_n$ and $\theta_n$, we plot in Fig. 3 both the real and the approximated values of the distance and the angle within 20 time slots, where the vehicle's speed is $54\text{km/h}$, and the length of each slot is $\Delta T = 100\text{ms}$. The initial values of the distance and the angle is 40m and $18^\circ$, respectively. One can see that even with a relatively large $\Delta d = 1.5\text{m}$, the approximation errors are still negligible in general. Therefore the associated variances could be very small.
\begin{figure}[!t]
    \centering
    \includegraphics[width=0.7\columnwidth]{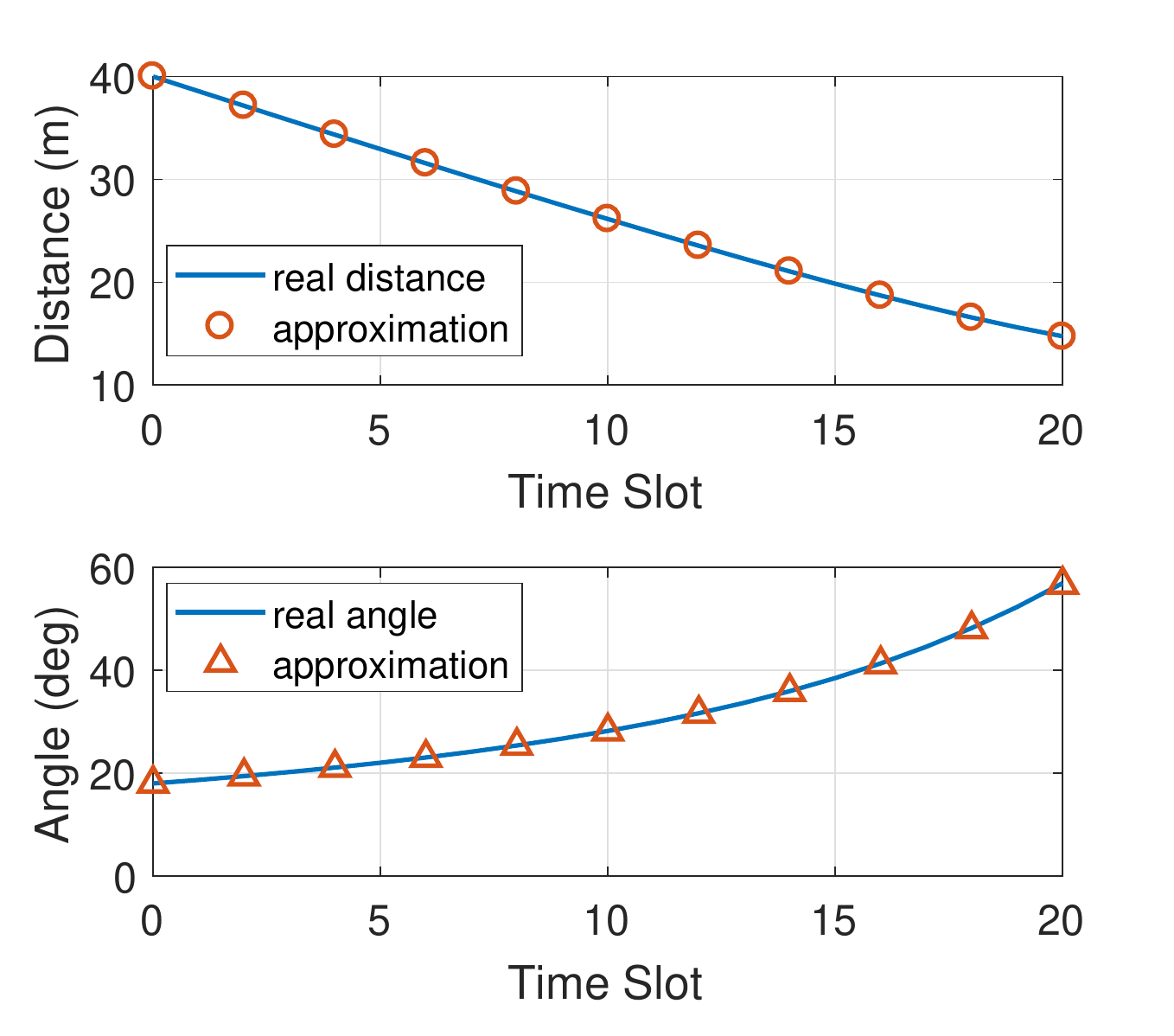}
    \caption{Verification of the approximation in (\ref{eq20}) and (\ref{eq24}).}
    \label{fig:3}
\end{figure}
\section{The Proposed Approach}
\subsection{Extended Kalman Filtering}
In this subsection, we propose a Kalman filtering scheme for beam prediction and tracking. Due to the nonlinearity in the measurement and the state evolution models, the linear Kalman filtering (LKF) can not be directly applied. We therefore consider an EKF approach that performs local linearization for nonlinear models. By denoting the state variables as ${\mathbf{x}} = {\left[ {\theta ,d,v,\beta} \right]^T}$ and the measured signal vector as ${\mathbf{y}} = {\left[ {{{{\mathbf{\tilde r}}}^T},\tau ,\mu } \right]^T}$, the models developed in (\ref{eq29}) and (\ref{eq8})-(\ref{eq10}) can be recast in compact forms as
\begin{equation}\label{eq30}
\left\{ \begin{gathered}
  {\text{State}}\;{\text{Evolution}}\;{\text{Model: }}{{\mathbf{x}}_n} = {\mathbf{g}}\left( {{{\mathbf{x}}_{n - 1}}} \right) + {{\bm{\omega }}_n}, \hfill \\
  {\text{Measurement}}\;{\text{Model:}}\;{{\mathbf{y}}_n} = {\mathbf{h}}\left( {{{\mathbf{x}}_n}} \right) + {{\mathbf{z}}_n}, \hfill \\
\end{gathered}  \right.
\end{equation}
where $\mathbf{g}\left(\cdot\right)$ is defined in (\ref{eq29}), with $\bm{\omega} = {\left[ {{\omega _\theta },{\omega _d},{\omega _v},{\omega _\beta }} \right]^T}$ being the noise vector that is independent to ${\mathbf{g}}\left( {{{\mathbf{x}}_{n - 1}}} \right)$. Similarly, $\mathbf{h}\left(\cdot\right)$ is defined as (\ref{eq8})-(\ref{eq10}), with ${\mathbf{z}} = {\left[ {{\mathbf{z}}_\theta ^T,{z_\tau },{z_f}} \right]^T}$ being the measurement noise that is independent to ${\mathbf{h}}\left( {{{\mathbf{x}}_n}} \right)$. As considered above, both $\bm{\omega}$ and $\mathbf{z}$ are zero-mean Gaussian distributed, with covariance matrices being expressed as
\begin{equation}\label{eq31}
  {{\mathbf{Q}}_s} = \operatorname{diag} \left( {\sigma _\theta^2,\sigma _d^2,\sigma _v^2,\sigma _{\beta}^2}\right),
\end{equation}
\begin{equation}\label{eq32}
  {{\mathbf{Q}}_m} = \operatorname{diag} \left( {\sigma _1^2{{\mathbf{1}}_{{N_r}}^T},\sigma _2^2,\sigma _3^2}\right),
\end{equation}
where $\mathbf{1}_{N_r}$ denotes a size-$N_r$ all one column vector. In order to linearize the models, the Jacobian matrices for both $\mathbf{g}\left(\mathbf{x}\right)$ and $\mathbf{h}\left(\mathbf{x}\right)$ need to be computed. By simple algebraic manipulation, the Jacobian matrix for $\mathbf{g}\left(\mathbf{x}\right)$ can be straightforwardly given as
\begin{equation}\label{eq33}
\begin{gathered}
  \frac{{\partial {\mathbf{g}}}}{{\partial {\mathbf{x}}}} = \hfill \\
  \left[ {\begin{array}{*{20}{c}}
  {1 + \frac{{v\Delta T\cos \theta }}{d}}&{ - \frac{{v\Delta T\sin \theta }}{{{d^2}}}}&{\frac{{\Delta T\sin \theta }}{d}}&0\\
  \scriptstyle{v\Delta T\sin \theta }&1&\scriptstyle{ - \Delta T\cos \theta }&0 \\
  0&0&1&0 \\
    - \frac{\beta{v\Delta T\sin \theta }}{d}&-\frac{{\beta v\Delta T\cos \theta }}{{{d^2}}}&\frac{{\beta \Delta T\cos \theta }}{d}&1 + \frac{{v\Delta T\cos \theta }}{d}
\end{array}} \right].
\end{gathered}
\end{equation}
For $\mathbf{h}\left(x\right)$, let us denote
\begin{equation}\label{eq34}
  {\bm{\eta }}\left( {\beta ,\theta } \right) = \kappa\beta {\mathbf{b}}\left( \theta  \right){{\mathbf{a}}^H}\left( \theta  \right){\mathbf{a}}\left( {\hat \theta } \right),
\end{equation}
where $\hat \theta$ is a prediction for $\theta$. The Jacobian matrix for $\mathbf{h}\left(x\right)$ can be then given by
\begin{equation}\label{eq34}
\frac{{\partial {\mathbf{h}}}}{{\partial {\mathbf{x}}}} = \left[ {\begin{array}{*{20}{c}}
  {\frac{{\partial {\bm\eta} }}{{\partial \theta }}}&0&0&{\frac{{\partial {\bm\eta} }}{{\partial {\beta }}}} \\
  0&{\frac{2}{c}}&0&0 \\
  { - \frac{{2v\sin \theta }}{c}}&0&{\frac{{2{f_c}\cos \theta }}{c}}&0
\end{array}} \right].
\end{equation}
It remains to compute ${\frac{{\partial {\bm{\eta }}}}{{\partial \theta }}}$ and ${\frac{{\partial {\bm{\eta }}}}{{\partial {\beta} }}}$. First of all, note that the derivatives with respect to $\beta$ can be simply calculated as
\begin{equation}\label{eq35}
\frac{{\partial \bm{\eta } }}{{\partial {\beta}}} = \kappa{\mathbf{b}}\left( \theta  \right){{\mathbf{a}}^H}\left( \theta  \right){\mathbf{a}}\left( {\hat \theta } \right).
\end{equation}
We then expand ${\bm{\eta }}\left( {\beta ,\theta } \right)$ as
\begin{equation}\label{eq36}
  {\bm{\eta }} = \frac{\beta}{\sqrt{N_t}}\left[ \begin{gathered}
  \sum\limits_{i = 1}^{{N_t}} {{e^{-j\pi \left( {i - 1} \right)\cos \hat \theta }}{e^{ j\pi \left( {i - 1} \right)\cos \theta }}}  \hfill \\
  \sum\limits_{i = 1}^{{N_t}} {{e^{-j\pi \left( {i - 1} \right)\cos \hat \theta }}{e^{ j\pi \left( {i - 2} \right)\cos \theta }}}  \hfill \\
   \vdots  \hfill \\
  \sum\limits_{i = 1}^{{N_t}} {{e^{-j\pi \left( {i - 1} \right)\cos \hat \theta }}{e^{ j\pi \left( {i - {N_r}} \right)\cos \theta }}}  \hfill \\
\end{gathered}  \right].
\end{equation}
It follows that
\begin{equation}\label{eq37}
\begin{gathered}
  \frac{{\partial {\bm{\eta }}}}{{\partial \theta }} =  \hfill \\
  \frac{\beta}{\sqrt{N_t}}\left[ \begin{gathered}
   - \sum\limits_{i = 1}^{{N_t}} {{e^{-j\pi \left( {\left( {i - 1} \right)\cos \hat \theta  - \left( {i - 1} \right)\cos\theta } \right)}}j\pi \left( {i - 1} \right)\sin \theta }  \hfill \\
   - \sum\limits_{i = 1}^{{N_t}} {{e^{-j\pi \left( {\left( {i - 1} \right)\cos \hat \theta  - \left( {i - 2} \right)\cos \theta } \right)}}j\pi \left( {i - 2} \right)\sin \theta }  \hfill \\
   \vdots  \hfill \\
   - \sum\limits_{i = 1}^{{N_t}} {{e^{-j\pi \left( {\left( {i - 1} \right)\cos \hat \theta  - \left( {i - {N_r}} \right)\cos \theta } \right)}}j\pi \left( {i - {N_r}} \right)\sin \theta }  \hfill \\
\end{gathered}  \right]. \hfill \\
\end{gathered}
\end{equation}
We are now ready to present the EKF technique. Following the standard procedure of Kalman filtering \cite{kay1998fundamentals}, the state prediction and tracking design is summarized as follows:
\begin{enumerate}
  \item \emph{State Prediction:}
  \begin{equation}\label{eq38}
    {{{\mathbf{\hat x}}}_{n\left| {n - 1} \right.}} = {\mathbf{g}}\left( {{{{\mathbf{\hat x}}}_{n - 1}}} \right),{{{\mathbf{\hat x}}}_{n + 1\left| {n - 1} \right.}} = {\mathbf{g}}\left( {{{{\mathbf{\hat x}}}_{n\left| {n - 1} \right.}}} \right).
  \end{equation}
  \item \emph{Linearization:}
  \begin{equation}\label{eq39}
    {{\mathbf{G}}_{n - 1}} = {\left. {\frac{{\partial {\mathbf{g}}}}{{\partial {\mathbf{x}}}}} \right|_{{\mathbf{x}} = {{{\mathbf{\hat x}}}_{n - 1}}}},{{\mathbf{H}}_n} = {\left. {\frac{{\partial {\mathbf{h}}}}{{\partial {\mathbf{x}}}}} \right|_{{\mathbf{x}} = {{{\mathbf{\hat x}}}_{n\left| {n - 1} \right.}}}}.
  \end{equation}
  \item \emph{MSE Matrix Prediction:}
  \begin{equation}\label{eq40}
    {{\mathbf{M}}_{n\left| {n - 1} \right.}} = {\mathbf{G}}_{n - 1}{{\mathbf{M}}_{n - 1}}{\mathbf{G}}_{n - 1}^H + {{\mathbf{Q}}_s}.
  \end{equation}
  \item \emph{Kalman Gain Calculation:}
  \begin{equation}\label{eq41}
    {{\mathbf{K}}_n} = {{\mathbf{M}}_{n\left| {n - 1} \right.}}{\mathbf{H}}_n^H{\left( {{{\mathbf{Q}}_m} + {{\mathbf{H}}_n}{{\mathbf{M}}_{n\left| {n - 1} \right.}}{\mathbf{H}}_n^H} \right)^{ - 1}}.
  \end{equation}
  \item \emph{State Tracking:}
  \begin{equation}\label{eq42}
    {{{\mathbf{\hat x}}}_n} = {{{\mathbf{\hat x}}}_{n\left| {n - 1} \right.}} + {{\mathbf{K}}_n}\left( {{{\mathbf{y}}_n} - {\mathbf{h}}\left( {{{{\mathbf{\hat x}}}_{n\left| {n - 1} \right.}}} \right)} \right).
  \end{equation}
  \item \emph{MSE Matrix Update:}
  \begin{equation}\label{eq43}
    {{\mathbf{M}}_n} = \left( {{\mathbf{I}} - {{\mathbf{K}}_n}{{\mathbf{H}}_n}} \right){{\mathbf{M}}_{n\left| {n - 1} \right.}}.
  \end{equation}
\end{enumerate}

\emph{Remark 3:} In the prediction step, the predicted angle ${\hat \theta}_{n\left| {n - 1} \right.}$ is used for transmit beamforming at the RSU at the $n$th epoch, ${\hat \theta}_{n+1\left| {n - 1} \right.}$ is sent to the vehicle for receive beamforming at the $\left(n+1\right)$th epoch. In the tracking step, based on the received target echo $\mathbf{y}_n$, the RSU refines the predicted state $\mathbf{\hat x}_{n\left| {n - 1} \right.}$ to obtain $\mathbf{\hat x}_n$, which will be used as the input of the predictor for the next iteration. By iteratively performing prediction and tracking, the RSU is able to simultaneously sense and communicate with the vehicle.
\subsection{Beam Association for Multiple Vehicles}
After processing the echoes received from $K$ beams individually, the RSU obtains the angle estimates for $K$ vehicles, which we denote as ${{{\bm{\hat \theta }}}_n} = \left[ {{{\hat \theta }_{1,n}},{{\hat \theta }_{2,n}}, \ldots {{\hat \theta }_{K,n}}} \right]$. To transmit the desired information at each beam, the RSU needs to correctly map the vehicles to the corresponding beams. In a communication-only beam tracking scenario, this can be realized by simply letting each vehicle send its own ID information to the RSU along with the uplink beam training signal at each epoch. However, such association can not be done in a straightforward manner in the joint radar-communication scenario considered, since the reflected echoes contain no ID information of the vehicles. Even if the vehicles are able to transmit the IDs through uplink communication, such information might be out-dated due to the high-dynamic vehicular channel. To tackle this issue, we propose in the following a simple and efficient beam association approach, which tracks the ID of each vehicle based on the corresponding state parameters, i.e., $\theta$, $d$, $v$ and $\beta$.
\\\indent Assume that at the initial stage, i.e., at the $0$th epoch, the RSU is able to obtain the initial ID information and the estimated angles ${{{\bm{\hat \theta }}}_0}$ from the initial uplink training. Our goal is then to associate the angles in ${{{\bm{\hat \theta }}}_n}$ with that of ${{{\bm{\hat \theta }}}_{n-1}}$ during each Kalman iteration. Let us assume that ${\hat \theta}_{i,n}$ is associated with ${\hat \theta}_{j,n-1}$. By taking a closer look at the scenario considered, we note that the corresponding state variation, i.e., $\left\| {{{\mathbf{x}}_{i,n}} - {{\mathbf{x}}_{j,n - 1}}} \right\|$, will be very small. As an example, we see from (\ref{eq20}) that the distance traveled within the time slot $\Delta T$ by a single vehicle can be approximated by
\begin{equation}\label{eq44}
  \left| {\Delta d} \right| \approx v\Delta T\cos \theta  \le v\Delta T.
\end{equation}
For a moderate driving speed $v = 15\text{m/s}$ and $\Delta T = 10\text{ms}$, we have $\left| {\Delta d_{i,j}} \right| \le 0.15\text{m}$. Nevertheless, it is not possible for two vehicles having such a small distance given their size as well as the safety distance that needs to be maintained. The above observation can be mathematically expressed as
\begin{equation}\label{eq55}
\left\| {{{\mathbf{x}}_{i,n}} - {{\mathbf{x}}_{j,n - 1}}} \right\| \ll \left\| {{{\mathbf{x}}_{i,n}} - {{\mathbf{x}}_{k,n - 1}}} \right\|,\forall k \ne j.
\end{equation}
Therefore, the beam association can be done by simply finding ${\mathbf x}_{j,n-1}$ that has the minimum Euclidean distance to ${\mathbf x}_{i,n}$, and then associate ${\hat \theta}_{i,n}$ with ${\hat \theta}_{j,n-1}$ accordingly.
\section{Multi-Beam Power Allocation for Joint Sensing and Communication}
At each epoch, the RSU will send joint radar-communication signals to $K$ vehicles via $K$ beams, whose reflections will be exploited for beam tracking. Furthermore, to maintain reliable communication links, the downlink quality-of-service (QoS) should be guaranteed. In this section, we propose a multi-beam power allocation scheme for improving the sensing performance for the vehicles while guaranteeing the sum-rate of V2I downlink communications. In particular, we are interested in minimizing the estimation errors with respect to the angle $\theta$ and the distance $d$.
\subsection{Water-filling Power Allocation: A Benchmark Method}
Before presenting our power allocation scheme, let us firstly review the classic water-filling approach, which will serve as the benchmark technique in the multi-vehicle scenario considered. To be specific, the water-filling solution is obtained by maximizing the sum-rate of all the $K$ vehicles under a given power budget, which requires to solve the following optimization problem:
\begin{equation}\label{eq_wf}
\begin{gathered}
  \mathop {\max }\limits_{{{\mathbf{p}}_n}} \;{R_n} = \sum\limits_{k = 1}^K {{{\log }_2}\left( {1 + {p_{k,n}}{\rho _{k,n}}} \right)}  \hfill \\
  s.t.\;\;\;{{\mathbf{1}}^T}{{\mathbf{p}}_n} \le {P_T},{p_{k,n}} \ge 0,\forall k, \hfill \\
\end{gathered}
\end{equation}
where ${{\mathbf{p}}_n} = {\left[ {{p_{1,n}},{p_{2,n}},...,{p_{K,n}}} \right]^T}$ is the power allocation vector, $R_n$ is the sum-rate defined by (\ref{eq15}) with $\rho_{k,n}$ being given by (\ref{eq14}), and finally $P_T$ is the transmit power budget of the RSU. It is well-known that the optimal solution for (\ref{eq_wf}) is given by \cite{boyd2004convex}
\begin{equation}\label{eq_wf_solution}
{p_{k,n}} = {\left( {\gamma  - \rho _{k,n}^{ - 1}} \right)^ + } = \max \left\{ {0,\gamma  - \rho _{k,n}^{ - 1}} \right\},
\end{equation}
where $\gamma$ is chosen such that the power constraint is satisfied. While the water-filling method is able to maximize the communication rate, we note here that it does not address the issue of minimizing the estimation errors in the V2I link, which we discuss in the following.
\subsection{Posterior Cram\'er-Rao Bound for Parameter Estimation}
To characterize the estimation performance, one needs to derive the Cram\'er-Rao bound (CRB) of parameter estimation, which serves as a lower-bound for the variances of any unbiased estimators \cite{kay1998fundamentals}. Unlike conventional CRB that relies on the measured data only, the CRB for vehicle parameter estimation needs to consider both the measurement and the state models, leading to a posterior Cram\'er-Rao bound (PCRB) \cite{5571900}. Given a measurement $\mathbf{y}_n$ with respect to a state $\mathbf{x}_n$, based on Bayes' theorem, the joint probability density function (PDF) of $\mathbf{x}_n$ and $\mathbf{y}_n$ can be expressed as \cite{kay1998fundamentals}
\begin{equation}\label{eq56}
p\left( {{{\mathbf{x}}_n},{{\mathbf{y}}_n}} \right) = p\left( {{{\mathbf{y}}_n}\left| {{{\mathbf{x}}_n}} \right.} \right)p\left( {{{\mathbf{x}}_n}} \right),
\end{equation}
where $p\left( {{{\mathbf{y}}_n}\left| {{{\mathbf{x}}_n}} \right.} \right)$ is the conditional PDF of $\mathbf{y}_n$ given $\mathbf{x}_n$, and $p\left( {{{\mathbf{x}}_n}} \right)$ is the prior PDF of $\mathbf{x}_n$, which is determined by the state evolution model. For the measurement $\mathbf{y}_n$, we have
\begin{equation}\label{eq57}
\begin{gathered}
  p\left( {{{\mathbf{y}}_n}\left| {{{\mathbf{x}}_n}} \right.} \right) = \hfill \\
    \frac{1}{{{\pi ^{{N_r} + 2}}\det \left( {{{\mathbf{Q}}_m}} \right)}}\exp \left( {{{\left( {{{\mathbf{y}}_n} - {\mathbf{h}}\left( {{{\mathbf{x}}_n}} \right)} \right)}^H}{\mathbf{Q}}_m^{ - 1}\left( {{{\mathbf{y}}_n} - {\mathbf{h}}\left( {{{\mathbf{x}}_n}} \right)} \right)} \right). \hfill \\
\end{gathered}
\end{equation}
On the other hand, $\mathbf{x}_n$ is dependent on $\mathbf{x}_{n-1}$, i.e., ${{\mathbf{x}}_n} = {\mathbf{g}}\left( {{{\mathbf{x}}_{n - 1}}} \right) + {{\bm\omega} _n}$, given ${{\mathbf{x}}_{n - 1}} \sim \mathcal{CN}\left( {{{{\mathbf{\hat x}}}_{n - 1}},{{\mathbf{M}}_{n - 1}}} \right)$. Due to the nonlinearity in $\mathbf{g}$, it is difficult to analytically derive the distribution of $\mathbf{x}_n$. We therefore resort to a linear approximation following the linearization step in the EKF procedure, in which case we have
\begin{equation}
  {{\mathbf{x}}_n} \approx {{\mathbf{G}}_{n - 1}}{{\mathbf{x}}_{n - 1}} + {{\bm\omega}_n},
\end{equation}
where ${{\mathbf{G}}_{n - 1}}$ is the Jacobian matrix defined in (\ref{eq39}). Since both $\mathbf{x}_{n-1}$ and ${{\bm\omega}_n}$ are Gaussian distributed and are independent with each other, ${{\mathbf{x}}_n}$ is Gaussian distributed as well, which subjects to
\begin{equation}
  {{\mathbf{x}}_n} \sim \mathcal{CN}\left( {{{\bm{\mu }}_n},{{\mathbf{M}}_{\left. n \right|n - 1}}} \right),{{\bm{\mu }}_n} = {{\mathbf{G}}_{n - 1}}{{{\mathbf{\hat x}}}_{n - 1}},
\end{equation}
where ${{\mathbf{M}}_{\left. n \right|n - 1}}$ is defined in (\ref{eq40}). As a result, the prior PDF $p\left({{\mathbf{x}}_n}\right)$ can be expressed as
\begin{equation}
\begin{gathered}
  p\left( {{{\mathbf{x}}_n}} \right) =  \hfill \\
  \frac{1}{{{\pi ^4}\det \left( {{{\mathbf{M}}_{\left. n \right|n - 1}}} \right)}}\exp \left( {{{\left( {{{\mathbf{x}}_n} - {{\bm{\mu }}_n}} \right)}^H}{\mathbf{M}}_{\left. n \right|n - 1}^{ - 1}\left( {{{\mathbf{x}}_n} - {{\bm{\mu }}_n}} \right)} \right). \hfill \\
\end{gathered}
\end{equation}
According to \cite{5571900}, the posterior Fisher information matrix (FIM) of $\mathbf{x}_n$ can be obtained by
\begin{equation}\label{eq57}
\begin{gathered}
  {\mathbf{J}} = - \mathbb{E}\left( {\frac{{{\partial ^2}\ln p\left( {{{\mathbf{x}}_n},{{\mathbf{y}}_n}} \right)}}{{\partial {\mathbf{x}}_n^2}}} \right) = {{\mathbf{J}}_{m}} + {{\mathbf{J}}_{s}} \hfill \\
   = \underbrace { - \mathbb{E}\left( {\frac{{{\partial ^2}\ln p\left( {{\mathbf{y}}_n\left| {\mathbf{x}_n} \right.} \right)}}{{\partial {{\mathbf{x}}_n^2}}}} \right)}_{{\text{Observed}}\;{\text{Fisher}}\;{\text{Information}}}\underbrace { - \mathbb{E}\left( {\frac{{{\partial ^2}\ln p\left( {\mathbf{x}_n} \right)}}{{\partial {{\mathbf{x}}_n^2}}}} \right)}_{{\text{Prior}}\;{\text{Fisher}}\;{\text{Information}}} \hfill \\
\end{gathered} ,
\end{equation}
where ${{\mathbf{J}}_{m}}$ and ${{\mathbf{J}}_{s}}$ are FIMs with respect to $p\left( {{{\mathbf{y}_n}}\left| {{{\mathbf{x}_n}}} \right.} \right)$ and $p\left( {{{\mathbf{x}}_n}} \right)$. Eq. (\ref{eq57}) implies that both the measurement $\mathbf{y}_n$ and the \emph{a priori} knowledge, i.e., the state evolution model, will provide Fisher information of $\mathbf{x}_n$. For the Gaussian PDFs above, the FIMs can be readily computed as
\begin{equation}\label{eq58}
\begin{gathered}
  {{\mathbf{J}}_{m}} = {\left( {\frac{{\partial {\mathbf{h}}}}{{\partial {{\mathbf{x}}_n}}}} \right)^H}{\mathbf{Q}}_m^{ - 1}\left( {\frac{{\partial {\mathbf{h}}}}{{\partial {{\mathbf{x}}_n}}}} \right), \hfill \\
  {{\mathbf{J}}_{s}} = {\mathbf{M}}_{\left. n \right|n - 1}^{ - 1} = {\left( {{{\mathbf{G}}_{n - 1}}{{\mathbf{M}}_{n - 1}}{\mathbf{G}}_{n - 1}^H + {{\mathbf{Q}}_s}} \right)^{ - 1}}. \hfill \\
\end{gathered}
\end{equation}
\\\indent \emph{Remark 4:} Note that ${{\mathbf{M}}_{\left. n \right|n - 1}}$ in (\ref{eq40}) consists of the state covariance matrix $\mathbf{Q}_s$ as well as the covariance matrix $\mathbf{M}_{n-1}$ of the $\left(n-1\right)$th estimation. This suggests that an accurate state model and an accurate estimate of $\mathbf{x}_{n-1}$ (which lead to ``small" $\mathbf{Q}_s$ and $\mathbf{M}_{n-1}$) will provide a large amount of Fisher information of $\mathbf{x}_n$.
\\\indent Based on the definition of CRB, the MSE matrix of $\mathbf{x}$ is lower bounded by the inverse of $\mathbf{J}$, which is
\begin{equation}\label{eq59}
\mathbb{E}\left( {\left( {{\mathbf{\hat x}_n} - {\mathbf{x}_n}} \right){{\left( {{\mathbf{\hat x}_n} - {\mathbf{x}_n}} \right)}^H}} \right) \succeq {{\mathbf{J}}^{ - 1}} \triangleq \mathbf{C}.
\end{equation}
Accordingly, the MSEs of angle and distance are bounded by
\begin{equation}\label{eq60}
\begin{gathered}
  \mathbb{E}\left( {{{\left( {{{\hat \theta }_n} - {\theta _n}} \right)}^2}} \right) \ge {c_{11}} \triangleq \operatorname{PCRB} \left( {{\theta _n}} \right), \hfill \\
  \mathbb{E}\left( {{{\left( {{{\hat d}_n} - {d_n}} \right)}^2}} \right) \ge {c_{22}} \triangleq \operatorname{PCRB} \left( {{d_n}} \right), \hfill \\
\end{gathered}
\end{equation}
where $c_{ij}$ denotes the $\left(i,j\right)$th entry of $\mathbf{C}$.

\subsection{Problem Formulation and Analysis}
In practice, since the real values of $\theta$ and $d$ are always unknown, one can only resort to predicted parameters for PCRB optimization, resulting in predicted FIM and PCRB. Given the predictions ${{{\hat \theta }_{k,\left. n \right|n - 1}}}$ and ${{{\hat d }_{k,\left. n \right|n - 1}}}$ for the $k$th vehicle at the $n$th epoch, the predicted PCRBs are obtained in the form
\begin{equation}\label{eq61}
\begin{gathered}
  \operatorname{PCRB} \left( {{{\hat \theta }_{k,\left. n \right|n - 1}}} \right) = {\left. {{c_{11}}} \right|_{{\mathbf{x}} = {{{\mathbf{\hat x}}}_{k,n\left| {n - 1} \right.}}}}, \hfill \\
  \operatorname{PCRB} \left( {{{\hat d}_{k,\left. n \right|n - 1}}} \right) = {\left. {{c_{22}}} \right|_{{\mathbf{x}} = {{{\mathbf{\hat x}}}_{k,n\left| {n - 1} \right.}}}}, \hfill \\
\end{gathered}
\end{equation}
where ${{{c}_{11}}}$ and ${{{c}_{22}}}$ are calculated by substituting ${{\mathbf{x}} = {{{\mathbf{\hat x}}}_{k,n\left| {n - 1} \right.}}}$ into (\ref{eq57})-(\ref{eq59}).
\begin{thm}
The predicted PCRB matrix equals to the updated MSE matrix in the Kalman iteration, i.e.,
  \begin{equation}
    {\mathbf{C}}\left| {_{{{\mathbf{x}}_n} = {{{\mathbf{\hat x}}}_{n\left| {n - 1} \right.}}}} \right. = {{\mathbf{M}}_n}.
  \end{equation}
\end{thm}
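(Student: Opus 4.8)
The plan is to establish the classical equivalence between the \emph{information form} and the \emph{covariance form} of the Kalman update: I will show that the posterior FIM $\mathbf{J}$, once evaluated at the predicted state $\hat{\mathbf{x}}_{n|n-1}$, is precisely the inverse of the updated MSE matrix $\mathbf{M}_n$, so that $\mathbf{C}=\mathbf{J}^{-1}=\mathbf{M}_n$. The single essential tool is the matrix inversion lemma (Woodbury identity); everything else is bookkeeping of the evaluation point.

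First I would assemble $\mathbf{J}$ from the two contributions in (\ref{eq58}). Evaluating the measurement Jacobian at the prediction point gives $\frac{\partial \mathbf{h}}{\partial \mathbf{x}_n}\big|_{\mathbf{x}_n=\hat{\mathbf{x}}_{n|n-1}}=\mathbf{H}_n$ by the definition in (\ref{eq39}), so the observed (measurement) information becomes $\mathbf{J}_m=\mathbf{H}_n^H\mathbf{Q}_m^{-1}\mathbf{H}_n$, while the prior information is $\mathbf{J}_s=\mathbf{M}_{n|n-1}^{-1}$. Hence
\[
\mathbf{J}=\mathbf{M}_{n|n-1}^{-1}+\mathbf{H}_n^H\mathbf{Q}_m^{-1}\mathbf{H}_n .
\]
This is exactly the information-form of the posterior covariance, and the evaluation is self-consistent because both $\mathbf{H}_n$ and $\mathbf{M}_{n|n-1}$ are themselves defined at $\hat{\mathbf{x}}_{n|n-1}$, the same linearization point at which the PCRB in (\ref{eq61}) is computed.

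Next I would invert $\mathbf{J}$ via the Woodbury identity, taking $\mathbf{A}=\mathbf{M}_{n|n-1}^{-1}$, $\mathbf{U}=\mathbf{H}_n^H$, inner factor $\mathbf{Q}_m^{-1}$, and $\mathbf{V}=\mathbf{H}_n$. This yields
\[
\mathbf{J}^{-1}=\mathbf{M}_{n|n-1}-\mathbf{M}_{n|n-1}\mathbf{H}_n^H\bigl(\mathbf{Q}_m+\mathbf{H}_n\mathbf{M}_{n|n-1}\mathbf{H}_n^H\bigr)^{-1}\mathbf{H}_n\mathbf{M}_{n|n-1}.
\]
I would then recognize the leading product as the Kalman gain $\mathbf{K}_n$ of (\ref{eq41}), so that $\mathbf{J}^{-1}=\left(\mathbf{I}-\mathbf{K}_n\mathbf{H}_n\right)\mathbf{M}_{n|n-1}$, which is verbatim the MSE update (\ref{eq43}). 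Therefore $\mathbf{C}=\mathbf{J}^{-1}=\mathbf{M}_n$, completing the argument.

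The computation is routine rather than deep; the points demanding care are confirming that the PCRB is evaluated at the identical point $\hat{\mathbf{x}}_{n|n-1}$ used to form $\mathbf{H}_n$ and $\mathbf{M}_{n|n-1}$ (so the Jacobians genuinely coincide), and aligning the groupings produced by the Woodbury expansion with the exact definition of $\mathbf{K}_n$. I do not anticipate a real obstacle: $\mathbf{Q}_m$ is diagonal with positive entries and $\mathbf{M}_{n|n-1}$ is positive definite by construction in (\ref{eq40}), so all the inverses invoked above exist and the manipulations are valid.
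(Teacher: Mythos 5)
Your proof is correct and follows essentially the same route as the paper: both arguments rest on identifying the posterior FIM $\mathbf{J}=\mathbf{M}_{n|n-1}^{-1}+\mathbf{H}_n^H\mathbf{Q}_m^{-1}\mathbf{H}_n$ (evaluated at $\hat{\mathbf{x}}_{n|n-1}$) with $\mathbf{M}_n^{-1}$. The only difference is that the paper simply cites this information-form identity for the EKF from a reference, whereas you derive it yourself via the Woodbury identity and match the result to the Kalman gain and MSE update equations, which makes your version more self-contained but not substantively different.
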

\renewcommand{\qedsymbol}{$\blacksquare$}
\begin{proof}
For EKF, the following identity holds true according to \cite{1101979}
\begin{equation}\label{identity_EKF}
  {\mathbf{M}}_n^{ - 1} = {\left( {{{\mathbf{G}}_{n - 1}}{{\mathbf{M}}_{n - 1}}{\mathbf{G}}_{n - 1}^H + {{\mathbf{Q}}_s}} \right)^{ - 1}} + {\mathbf{H}}_n^H{\mathbf{Q}}_m^{ - 1}{{\mathbf{H}}_n}.
\end{equation}
Comparing (\ref{identity_EKF}) with (\ref{eq57}) and (\ref{eq58}), it can be immediately observed that substituting ${{{\mathbf{\hat x}}}_{n\left| {n - 1} \right.}}$ to the Jacobian matrix ${\frac{{\partial {\mathbf{h}}}}{{\partial {{\mathbf{x}}_n}}}}$ yields
\begin{equation}
  {\mathbf{J}}\left| {_{{{\mathbf{x}}_n} = {{{\mathbf{\hat x}}}_{n\left| {n - 1} \right.}}}} \right. = {\mathbf{M}}_n^{ - 1}.
\end{equation}
This completes the proof.
\end{proof}
\indent \emph{Remark 5:} Based on Theorem 1, the predicted PCRBs (\ref{eq61}) are equivalent to the first and the second entries in $\mathbf{M}_n$. Nevertheless, since both the MSE matrix $\mathbf{M}_n$ and the PCRB are approximations of their real counterparts due to the linearization steps involved, (\ref{eq61}) should still be regarded as approximated bounds of the MSEs rather than their real values.
\\\indent Since the estimations of each $\theta_{k,n}$ and $d_{k,n}$ are independent with each other, the joint PCRB for $K$ vehicles can be given by the following summation as
\begin{equation}\label{eq62}
\begin{gathered}
  \sum\limits_{k = 1}^K {\operatorname{PCRB} \left( {{{\hat \theta }_{k,\left. n \right|n - 1}}} \right)}  = \sum\limits_{k = 1}^K {{{\left. {{c_{11}}} \right|}_{{\mathbf{x}} = {{{\mathbf{\hat x}}}_{k,n\left| {n - 1} \right.}}}}}, \hfill \\
  \sum\limits_{k = 1}^K {\operatorname{PCRB} \left( {{{\hat d}_{k,\left. n \right|n - 1}}} \right)}  = \sum\limits_{k = 1}^K {{{\left. {{c_{22}}} \right|}_{{\mathbf{x}} = {{{\mathbf{\hat x}}}_{k,n\left| {n - 1} \right.}}}}}. \hfill \\
\end{gathered}
\end{equation}
Our goal is to optimally allocate transmit power among multiple beams, such that the joint PCRB (\ref{eq62}) can be minimized while ensuring the downlink sum-rate of the communication. This problem can be formulated as
\begin{equation}\label{eq63}
\begin{gathered}
  \mathop {\min }\limits_{{{\mathbf{p}}_n}} \;\sum\limits_{k = 1}^K {\left( {\operatorname{PCRB} \left( {{{\hat \theta }_{k,\left. n \right|n - 1}}} \right) + \operatorname{PCRB} \left( {{{\hat d}_{k,\left. n \right|n - 1}}} \right)} \right)}   \hfill \\
  s.t.\;\;{R_n} \ge R_t,{{\mathbf{1}}^T}{{\mathbf{p}}_n} \le {P_T}, p_{k,n} \ge 0, \forall k, \hfill \\
\end{gathered}
\end{equation}
where $R_t$ is a required sum-rate threshold, and all the other notations are defined as the same in (\ref{eq_wf}). While the dependency of $R_n$ and $p_{k,n},\forall k$ has been explicitly shown in (\ref{eq15}), it remains to reveal the relationship between ${\left. {{{c}_{11}}} \right|_{{\mathbf{x}} = {{{\mathbf{\hat x}}}_{k,n\left| {n - 1} \right.}}}}, {\left. {{{c}_{22}}} \right|_{{\mathbf{x}} = {{{\mathbf{\hat x}}}_{k,n\left| {n - 1} \right.}}}}$ and $p_{k,n}$. Let us denote \footnote{Note that since the real values of $\beta _{k,n}$ and $\delta _{k,n}$ are unknown to the RSU, one has to resort to predicted values, i.e., ${\beta _{k,n}} = {{\hat \beta }_{k,n\left| {n - 1} \right.}},{\delta _{k,n}} = {{\hat \delta }_{k,n\left| {n - 1} \right.}} = 1$, to estimate ${{{\mathbf{\tilde Q}}}_m}$ in practice.}
\begin{equation}\label{eq64}
\begin{gathered}
  {{\mathbf{Q}}_m} =
  p_{k,n}^{ - 1}\operatorname{diag} \left( {\frac{{{a_1^2}{\sigma ^2}}}{G},\frac{{{a_2^2}{\sigma ^2}}}{{G{\kappa ^2}{{\left| {{\beta _{k,n}}{\delta _{k,n}}} \right|}^2}}},\frac{{{a_3^2}{\sigma ^2}}}{{G{\kappa ^2}{{\left| {{\beta _{k,n}}{\delta _{k,n}}} \right|}^2}}}} \right) \hfill \\
   \triangleq p_{k,n}^{ - 1}{{{\mathbf{\tilde Q}}}_m}. \hfill \\
\end{gathered}
\end{equation}
It then follows that
\begin{equation}\label{eq65}
\begin{gathered}
  {\mathbf{J}} = {{\mathbf{J}}_m}\left| {_{{{\mathbf{x}}_n} = {{{\mathbf{\hat x}}}_{n\left| {n - 1} \right.}}} + } \right.{{\mathbf{J}}_s} \hfill \\
   = {p_{k,n}}{\mathbf{H}}_n^H{\mathbf{\tilde Q}}_m^{ - 1}{{\mathbf{H}}_n} + {\mathbf{M}}_{\left. n \right|n - 1}^{ - 1} \triangleq {p_{k,n}}{\mathbf{A}} + {\mathbf{B}}, \hfill \\
\end{gathered}
\end{equation}
where $\mathbf{H}_n$ is defined in (\ref{eq39}), ${\mathbf{A}} = {\mathbf{H}}_n^H{\mathbf{\tilde Q}}_m^{ - 1}{{\mathbf{H}}_n}$ and ${\mathbf{B}} = {\mathbf{M}}_{\left. n \right|n - 1}^{ - 1}$. The inverse of the FIM can be given by
\begin{equation}\label{eq66}
\begin{gathered}
  {\mathbf{C}} = {\left( {{p_{k,n}}{\mathbf{A}} + {\mathbf{B}}} \right)^{ - 1}} = {\left( {{p_{k,n}}{\mathbf{A}} + {{\mathbf{B}}^{1/2}}{{\mathbf{B}}^{H/2}}} \right)^{ - 1}} \hfill \\
   = {{\mathbf{B}}^{ - H/2}}{\left( {{p_{k,n}}{{\mathbf{B}}^{ - 1/2}}{\mathbf{A}}{{\mathbf{B}}^{ - H/2}} + {\mathbf{I}}} \right)^{ - 1}}{{\mathbf{B}}^{ - 1/2}}, \hfill \\
\end{gathered}
\end{equation}
where ${\mathbf{B}}^{1/2}$ is a $4\times 4$ square-root of $\mathbf{B}$, with ${\mathbf{B}}^{H/2}$ being its Hermitian transpose. Both ${\mathbf{B}}^{1/2}$ and its Hermitian transpose are full-rank matrices and are thus invertible, since
\begin{equation}\label{eq67}
\begin{gathered}
  4 \ge \operatorname{rank} \left( {{{\mathbf{B}}^{1/2}}} \right) = \operatorname{rank} \left( {{{\mathbf{B}}^{H/2}}} \right) \hfill \\
   \ge \operatorname{rank} \left( {{{\mathbf{B}}^{1/2}}{{\mathbf{B}}^{H/2}}} \right) = \operatorname{rank} \left( {\mathbf{B}} \right) = 4. \hfill \\
\end{gathered}
\end{equation}
By taking the eigenvalue decomposition of ${{{\mathbf{B}}^{ - 1/2}}{\mathbf{A}}{{\mathbf{B}}^{ - H/2}}}$ we have
\begin{equation}\label{eq67}
  {\mathbf{U}}{\bm \Lambda} {{\mathbf{U}}^H} = {{\mathbf{B}}^{ - 1/2}}{\mathbf{A}}{{\mathbf{B}}^{ - H/2}},
\end{equation}
where $\mathbf{U}$ is an orthogonal matrix that contains the eigenvectors, ${\bm \Lambda}$ is a diagonal matrix composed by eigenvalues. Substituting (\ref{eq67}) into (\ref{eq66}) yields
\begin{equation}\label{eq68}
\begin{gathered}
  {\mathbf{C}} = {{\mathbf{B}}^{ - H/2}}{\left( {p_{k,n}{\mathbf{U}}{\bm \Lambda} {{\mathbf{U}}^H} + {\mathbf{I}}} \right)^{ - 1}}{{\mathbf{B}}^{ - 1/2}} \hfill \\
   = {{\mathbf{B}}^{ - H/2}}{\mathbf{U}}{\left( {p_{k,n}{\bm \Lambda}  + {\mathbf{I}}} \right)^{ - 1}}{{\mathbf{U}}^H}{{\mathbf{B}}^{ - 1/2}} \triangleq {\mathbf{\tilde B}}{\left( {p_{k,n}{\bm \Lambda}  + {\mathbf{I}}} \right)^{ - 1}}{{{\mathbf{\tilde B}}}^H}, \hfill \\
\end{gathered}
\end{equation}
where ${{\mathbf{\tilde B}}} = {{\mathbf{B}}^{ - H/2}}{\mathbf{U}}$. Given the diagonal structure of $\bm{\Lambda}$, we have
\begin{equation}\label{eq69}
  {\left( {p_{k,n}\bm{\Lambda}  + {\mathbf{I}}} \right)^{ - 1}} = \operatorname{diag} \left( {\frac{1}{{p_{k,n}{\lambda _{1,k,n}} + 1}}, \ldots \frac{1}{{p_{k,n}{\lambda _{4,k,n}} + 1}}} \right),
\end{equation}
where $\lambda_{i,k,n},i = 1,...,4$ are the eigenvalues in $\bm{\Lambda}$. The $\left(i,j\right)$th entry of $\mathbf{C}$ can be directly computed as
\begin{equation}\label{eq70-1}
  {c_{ij}} = \sum\limits_{m = 1}^4 {\frac{{{{\tilde b}_{im}}{{\tilde b}_{jm}^*}}}{{p{\lambda _{m,k,n}} + 1}}},
\end{equation}
where ${\tilde b}_{ij}$ is the $\left(i,j\right)$th entry of ${{\mathbf{\tilde B}}}$. Accordingly, the PCRBs for $\theta_{k,n}$ and $d_{k,n}$ are
\begin{equation}\label{eq70}
  {c_{11}} = \sum\limits_{m = 1}^4 {\frac{{{{\left| {{{\tilde b}_{1m}}} \right|}^2}}}{{{p_{k,n}}{\lambda _{m,k,n}} + 1}}}, {c_{22}} = \sum\limits_{m = 1}^4 {\frac{{{{\left| {{{\tilde b}_{2m}}} \right|}^2}}}{{{p_{k,n}}{\lambda _{m,k,n}} + 1}}}.
\end{equation}
Problem (\ref{eq63}) can be therefore recast as
\begin{equation}\label{eq71}
\begin{gathered}
  \mathop {\min }\limits_{{{\mathbf{p}}_n}} \sum\limits_{k = 1}^K {\sum\limits_{m = 1}^4 {\frac{\left|{\tilde b_{1m,k,n}}\right|^2+\left|{\tilde b_{2m,k,n}}\right|^2}{{{p_{k,n}}{\lambda _{m,k,n}} + 1}}} }  \hfill \\
  s.t.\;\;\sum\limits_{k = 1}^K {{{\log }_2}\left( {1 + {\rho_{k,n}}{p_{k,n}}} \right)}  \ge R_t, {{\mathbf{1}}^T}{{\mathbf{p}}_n} \le {P_T}, p_{k,n} \ge 0, \forall k, \hfill \\
\end{gathered}
\end{equation}
where $\tilde b_{1m,k,n}, \tilde b_{2m,k,n}$ and $\lambda _{m,k,n}$ are obtained by substituting the prediction ${{\mathbf{x}} = {{{\mathbf{\hat x}}}_{k,n\left| {n - 1} \right.}}}$ into (\ref{eq65})-(\ref{eq68}), and $\rho _{k,n}$ is defined in (\ref{eq14}). It can be readily observed that by solving (\ref{eq71}), the power budget $P_T$ will always be reached. This can be proved by contradiction. Assume that the summation of the optimal $p_{k,n}, \forall k$ is less than $P_T$. One can then increase any $p_{k,n}$ to reach a summation of $P_T$, while further reducing the objective function and increasing the sum-rate. Therefore, the optimal power allocation should fully exploit the budget $P_T$.
\\\indent Since we focus on the power allocation for each epoch, the time index in (\ref{eq71}) can be omitted, i.e., $\tilde b_{1m,k,n}$, $\tilde b_{2m,k,n}$, $\lambda _{m,k,n}$, $\rho _{k,n}$ and $p_{k,n}$ will be represented by $\tilde b_{1m,k}$, $\tilde b_{2m,k}$, $\lambda _{m,k}$, $\rho _{k}$ and $p_k$, respectively. Problem (\ref{eq71}) can be equivalently formulated as
\begin{equation}\label{eq72}
\begin{gathered}
  \mathop {\min }\limits_{\mathbf{p}} \sum\limits_{k = 1}^K {\sum\limits_{m = 1}^4 {\frac{{{\left|\tilde b_{1m,k}\right|^2+\left|\tilde b_{2m,k}\right|^2}}}{{{p_k}{\lambda _{m,k}} + 1}}} }  \hfill \\
  s.t.\;\;\sum\limits_{k = 1}^K {{{\log }_2}\left( {1 + {\rho _k}{p_k}} \right)}  \ge R_t, \sum\limits_{k = 1}^K {{p_k}}  = {P_T}, p_k \ge 0, \forall k. \hfill \\
\end{gathered}
\end{equation}
\begin{lemma}
$\lambda_{m,k} \ge 0, \forall m, \forall k$.
\end{lemma}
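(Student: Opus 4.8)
The plan is to recognize that the scalars $\lambda_{m,k}$ are, by the eigenvalue decomposition in (\ref{eq67}), precisely the eigenvalues of the matrix ${{\mathbf{B}}^{ - 1/2}}{\mathbf{A}}{{\mathbf{B}}^{ - H/2}}$, and to show that this matrix is positive semidefinite, whence all of its eigenvalues are nonnegative. The key structural fact is that ${\mathbf{A}} = {\mathbf{H}}_n^H {\mathbf{\tilde Q}}_m^{ - 1} {{\mathbf{H}}_n}$ is itself positive semidefinite, and that the congruence by ${\mathbf{B}}^{ - 1/2}$ preserves this property.

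First I would establish that ${\mathbf{\tilde Q}}_m^{ - 1} \succ 0$. From its definition in (\ref{eq64}), ${\mathbf{\tilde Q}}_m$ is a diagonal matrix whose entries are ratios of strictly positive quantities (the noise variance $\sigma^2$, the matched-filtering gain $G$, the array gain $\kappa^2$, and the squared moduli ${\left| {{\beta _{k,n}}{\delta _{k,n}}} \right|}^2$), hence it is positive definite and so is its inverse. Consequently, for any vector ${\mathbf{v}}$ we have ${\mathbf{v}}^H {\mathbf{A}} {\mathbf{v}} = \left( {{{\mathbf{H}}_n}{\mathbf{v}}} \right)^H {\mathbf{\tilde Q}}_m^{ - 1} \left( {{{\mathbf{H}}_n}{\mathbf{v}}} \right) \ge 0$, which shows ${\mathbf{A}} \succeq 0$.

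Next I would invoke the congruence argument. Writing ${\mathbf{P}} = {\mathbf{B}}^{ - 1/2}$ so that ${\mathbf{B}}^{ - H/2} = {\mathbf{P}}^H$, the matrix whose eigenvalues appear in (\ref{eq67}) is ${\mathbf{P}}{\mathbf{A}}{\mathbf{P}}^H$. For any vector ${\mathbf{w}}$, ${\mathbf{w}}^H {\mathbf{P}}{\mathbf{A}}{\mathbf{P}}^H {\mathbf{w}} = \left( {{{\mathbf{P}}^H}{\mathbf{w}}} \right)^H {\mathbf{A}} \left( {{{\mathbf{P}}^H}{\mathbf{w}}} \right) \ge 0$ by the previous step, so ${\mathbf{P}}{\mathbf{A}}{\mathbf{P}}^H \succeq 0$. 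Since ${\mathbf{A}}$ is Hermitian, ${\mathbf{P}}{\mathbf{A}}{\mathbf{P}}^H$ is Hermitian as well, so the decomposition (\ref{eq67}) has real eigenvalues, all of which are nonnegative by positive semidefiniteness. This yields $\lambda_{m,k} \ge 0$ for all $m,k$. Equivalently, one could appeal to Sylvester's law of inertia: since ${\mathbf{B}}^{ - 1/2}$ is invertible (as noted in (\ref{eq67})), the transformation is a genuine congruence and preserves the inertia of ${\mathbf{A}}$, which has no negative eigenvalues.

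I expect no real obstacle here; the only point requiring minor care is the bookkeeping of the square-root notation, to confirm that ${\mathbf{B}}^{ - 1/2}{\mathbf{A}}{\mathbf{B}}^{ - H/2}$ is a bona fide congruence of ${\mathbf{A}}$ and therefore Hermitian, which is what guarantees the eigenvalues are real as well as nonnegative.
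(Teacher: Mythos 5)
Your proof is correct and follows essentially the same route as the paper: both show that $\mathbf{B}^{-1/2}\mathbf{A}\mathbf{B}^{-H/2}$ is positive semidefinite via the quadratic-form (congruence) argument, so its eigenvalues $\lambda_{m,k}$ are nonnegative. The only difference is that you additionally justify $\mathbf{A}=\mathbf{H}_n^H\tilde{\mathbf{Q}}_m^{-1}\mathbf{H}_n\succeq 0$ from the positive definiteness of the diagonal matrix $\tilde{\mathbf{Q}}_m$, a fact the paper simply asserts.
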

\renewcommand{\qedsymbol}{$\blacksquare$}
\begin{proof}
Since $\lambda_{m,k}, \forall m$ are obtained as the eigenvalues of (\ref{eq67}), it is sufficient to prove that (\ref{eq67}) is a semidefinite matrix. Note that for any given $\mathbf{x} \in \mathbb{C}^{4 \times 1}$, it holds that
\begin{equation}\label{eq73}
  {{\mathbf{x}}^H}{{\mathbf{B}}^{ - 1/2}}{\mathbf{A}}{{\mathbf{B}}^{ - H/2}}{\mathbf{x}} = {\left( {{{\mathbf{B}}^{ - H/2}}{\mathbf{x}}} \right)^H}{\mathbf{A}}\left( {{{\mathbf{B}}^{ - H/2}}{\mathbf{x}}} \right) \ge 0,
\end{equation}
where the inequality holds based on the fact that $\mathbf{A} \succeq \mathbf{0}$. Therefore, ${{\mathbf{B}}^{ - 1/2}}{\mathbf{A}}{{\mathbf{B}}^{ - H/2}}$ is by definition a semidefinite matrix, which completes our proof.
\end{proof}
\begin{thm}
  Problem (\ref{eq72}) is convex.
\end{thm}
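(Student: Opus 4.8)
The plan is to verify the three standard conditions that make a minimization program convex: the objective is convex on the feasible region, each inequality constraint can be written as $g(\mathbf{p}) \le 0$ with $g$ convex, and each equality constraint is affine \cite{boyd2004convex}. I would dispose of the constraints first and then concentrate on the objective, where the real work lies.

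For the constraints, the budget condition $\sum_{k=1}^K p_k = P_T$ is affine and the non-negativity conditions $p_k \ge 0$ are linear, so each defines a convex set. For the rate requirement I would rewrite $\sum_{k=1}^K \log_2\left(1 + \rho_k p_k\right) \ge R_t$ equivalently as $R_t - \sum_{k=1}^K \log_2\left(1 + \rho_k p_k\right) \le 0$. Since $\rho_k \ge 0$, every inner map $p_k \mapsto 1 + \rho_k p_k$ is affine and stays at least $1$ on the feasible set, while $\log_2(\cdot)$ is concave; as the composition of a concave function with an affine map is concave, the sum is concave and its negation is convex, so this constraint has the required form.

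The crux is the convexity of the objective, and here I would argue term by term. Each summand has the form $\frac{|\tilde b_{1m,k}|^2 + |\tilde b_{2m,k}|^2}{p_k \lambda_{m,k} + 1}$, whose numerator is a non-negative constant. Invoking Lemma 2, which guarantees $\lambda_{m,k} \ge 0$, the denominator $p_k \lambda_{m,k} + 1$ is affine in $p_k$ and bounded below by $1$ on $p_k \ge 0$. Because $t \mapsto 1/t$ is convex on $t > 0$, precomposing it with this affine map preserves convexity, and scaling by the non-negative numerator preserves it again; summing over $m$ and $k$ then yields a convex objective. Equivalently, since each summand involves only the single variable $p_k$, the Hessian is diagonal with non-negative entries, hence positive semidefinite.

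The step I expect to be the main obstacle --- and the reason Lemma 2 is proved immediately beforehand --- is precisely the convexity of these reciprocal terms: without the sign condition $\lambda_{m,k} \ge 0$ the denominator could vanish or change sign for $p_k \ge 0$, and the terms would no longer be convex. Once this is secured, combining the convex objective with the convex inequality constraints and the affine equality constraint establishes that Problem~(\ref{eq72}) is convex.
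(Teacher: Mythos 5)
Your proposal is correct and follows essentially the same route as the paper: both reduce the problem to showing each term $\bigl(|\tilde b_{1m,k}|^2+|\tilde b_{2m,k}|^2\bigr)/(p_k\lambda_{m,k}+1)$ is convex in $p_k$ by invoking Lemma~2 to control the sign of $\lambda_{m,k}$, with the constraints dispatched as concave/affine. The only cosmetic difference is that the paper verifies convexity by computing the second-order derivative explicitly, whereas you appeal to the composition rule (convexity of $1/t$ with an affine positive argument), which you also note is equivalent to the nonnegative-diagonal-Hessian check.
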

\begin{proof}
  It can be readily seen that the sum-rate constraint in (\ref{eq72}) is concave in $\mathbf{p}$, and the power constraints are linear. Therefore, it is sufficient to prove that the objective function of (\ref{eq72}) is convex. For each term in the summation, the second-order derivative can be calculated as
  \begin{equation}\label{eq74}
    \frac{{{\partial ^2}\frac{{{{\left| {{{\tilde b}_{1m,k}}} \right|}^2} + {{\left| {{{\tilde b}_{2m,k}}} \right|}^2}}}{{{p_k}{\lambda _{m,k}} + 1}}}}{{\partial p_k^2}} = \frac{{2\left( {{{\left| {{{\tilde b}_{1m,k}}} \right|}^2} + {{\left| {{{\tilde b}_{2m,k}}} \right|}^2}} \right)\lambda _{m,k}^2}}{{{{\left( {{p_k}{\lambda _{m,k}} + 1} \right)}^3}}} \ge 0,
  \end{equation}
where the inequality holds true based on Lemma 1, which implies that each term is a convex function in $p_k$. The summation is therefore convex. This completes the proof.
\end{proof}
Due to the convexity of (\ref{eq72}), one can efficiently solve it by using numerical tools, e.g., CVX.
\section{Numerical Results}
In this section, we present the numerical results to validate the effectiveness of the proposed techniques for both sensing and communication. Unless otherwise specified, both the RSU and the vehicles operate at $f_c = 30\;\text{GHz}$, and we use $\Delta T = 0.02\text{s}$ as the block duration, $\sigma^2 = {\sigma_C^2} = 1$ as the noise variances for radar and communication, and ${\tilde \alpha} = 1$ as the reference communication channel coefficient. For the state evolution noises, we set $\sigma_\theta = 0.02^\circ$, $\sigma_d = 0.2\text{m}$, $\sigma_v = 0.5\text{m/s}$ and $\sigma_\beta = 0.1$, respectively. Note that here the variances for the state evolution are small since they stand for the approximation errors in the evolution models, which are irrelevant to the actual SNR. Moreover, the difference between two adjacent states is small given the short time duration $\Delta T$. As a consequence, the state variances should be set small enough. For the measurement noise variance, we set $a_1 = 1$, $a_2 = 6.7 \times 10^{-7}$ and $a_3 = 2 \times 10^4$. Finally, the matched-filtering gain is assumed to be $G = 10$.
\subsection{Performance for Tracking A Single Vehicle}
We first study the communication and sensing performances of the proposed technique with respect to a single vehicle. Without loss of generality, the initial state of the vehicle is set to $\theta_0 = 9.2^\circ$, $d_0 = 25\text{m}$, $v_0 = 20\text{m/s}$ and $\beta_0 = 0.5 + 0.5j$. The antenna number at the vehicle is $M = 32$.
\begin{figure}[!t]
    \centering
    \includegraphics[width=0.85\columnwidth]{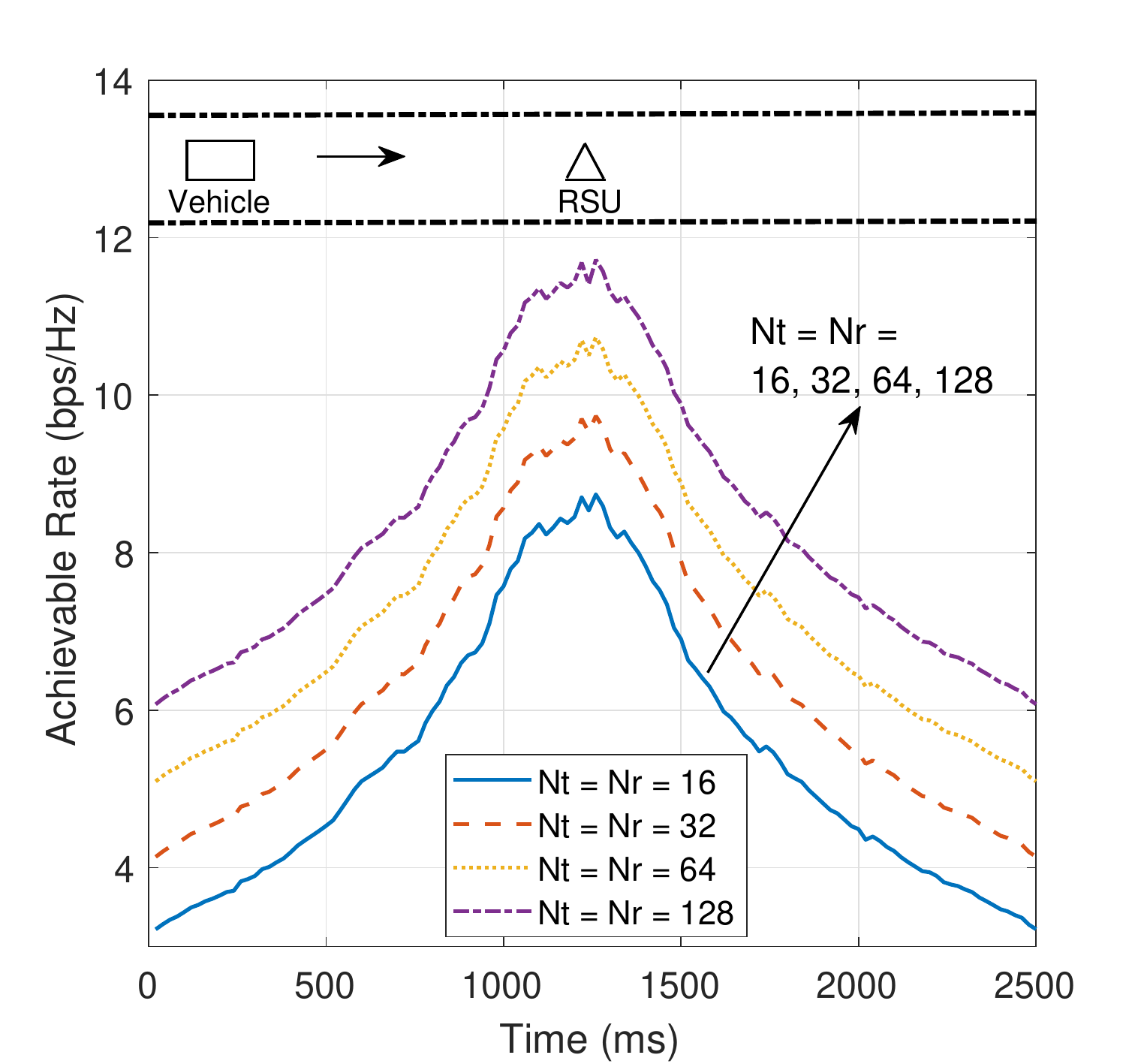}
    \caption{Achievable rate for a single vehicle, with initial state $\theta_0 = 9.2^\circ$, $d_0 = 25\text{m}$, $v_0 = 20\text{m/s}$ and $\beta_0 = 0.5 + 0.5j$, $M = 32$, $\text{SNR} = 10\text{dB}$.}
    \label{fig:4}
\end{figure}
\begin{figure}[!t]
    \centering
    \includegraphics[width=0.85\columnwidth]{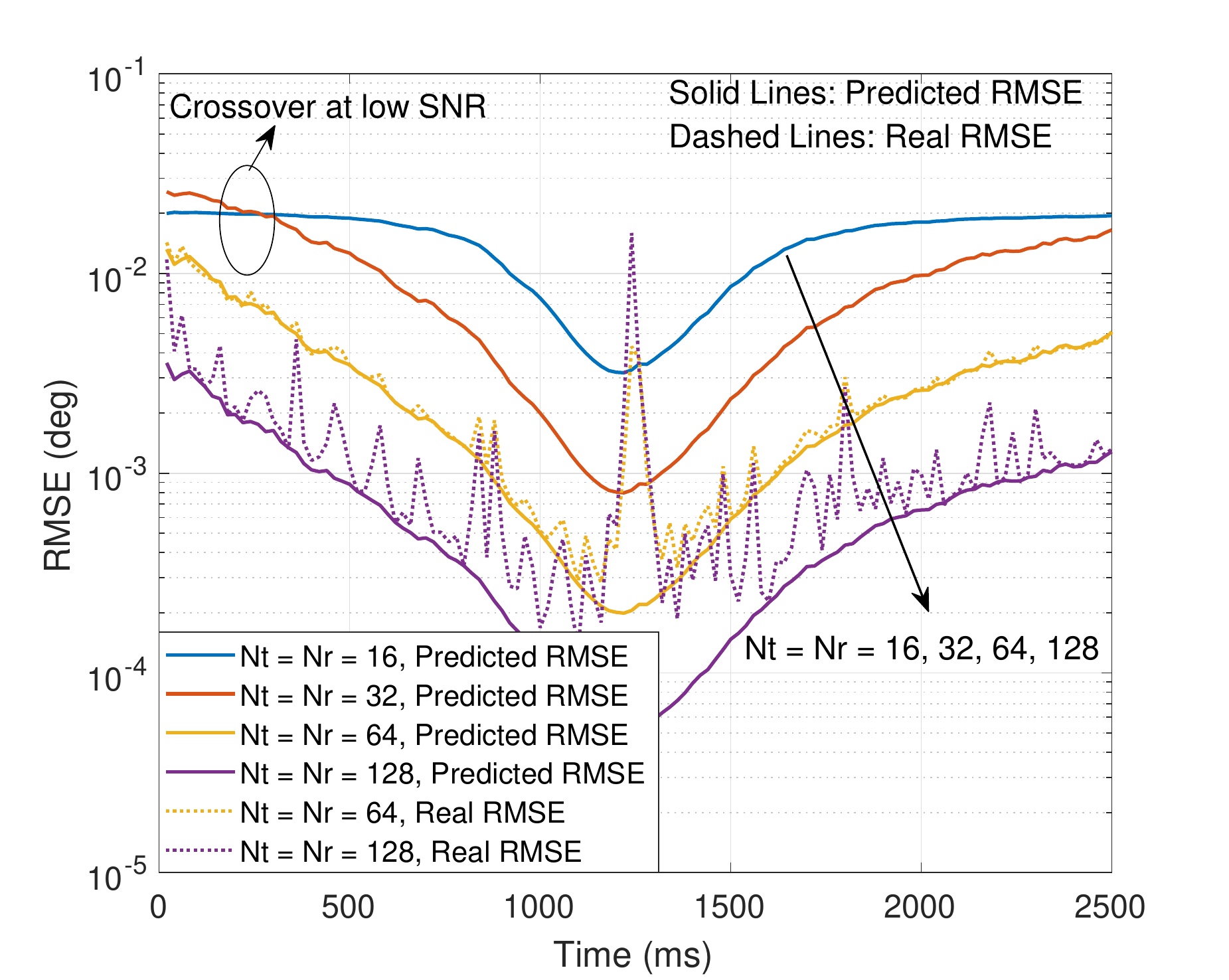}
    \caption{Angle tracking performance for a single vehicle, with initial state $\theta_0 = 9.2^\circ$, $d_0 = 25\text{m}$, $v_0 = 20\text{m/s}$ and $\beta_0 = 0.5 + 0.5j$, $M = 32$, $\text{SNR} = 10\text{dB}$.  }
    \label{fig:5}
\end{figure}
\begin{figure}[!t]
    \centering
    \includegraphics[width=0.85\columnwidth]{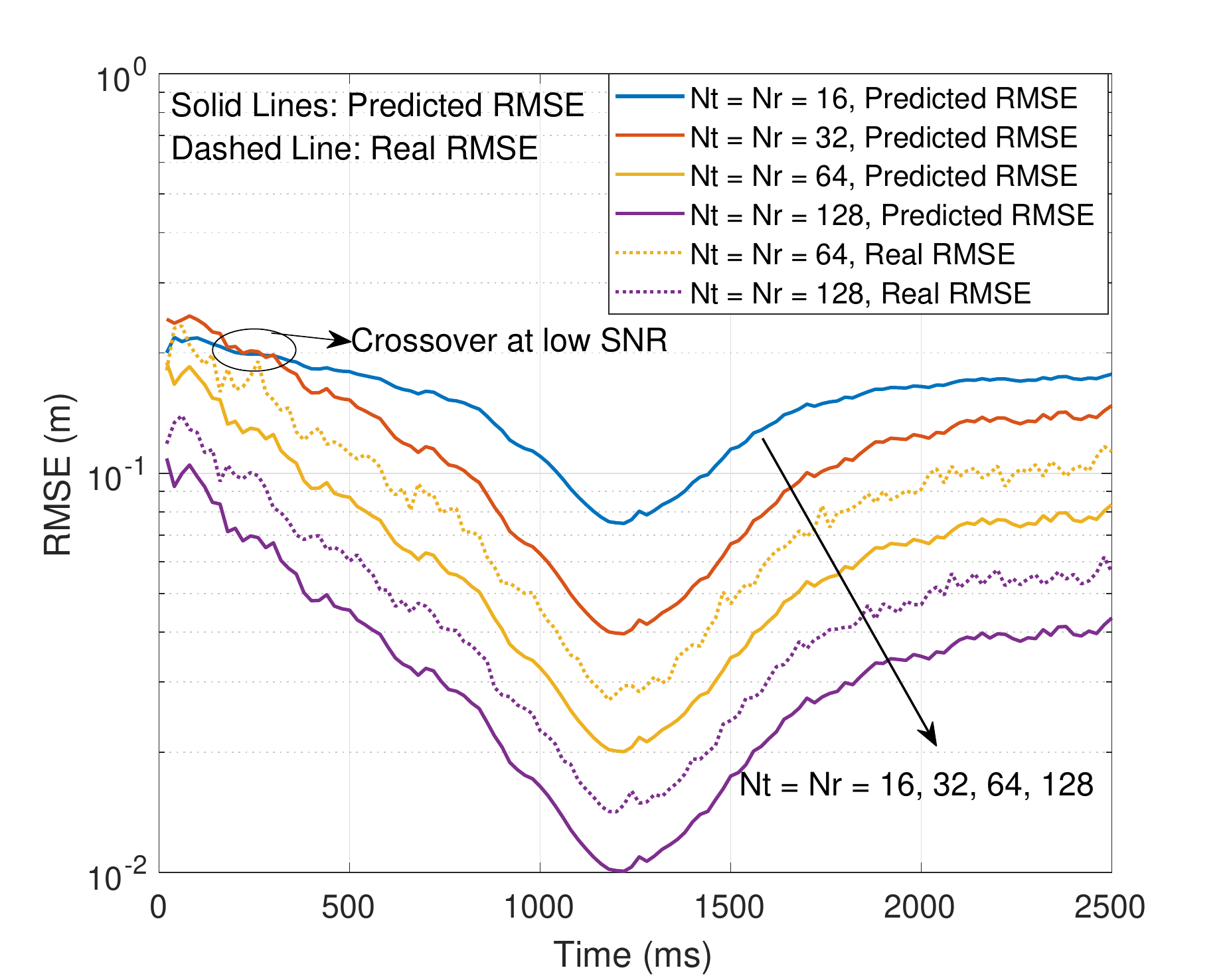}
    \caption{Distance tracking performance for a single vehicle, with initial state $\theta_0 = 9.2^\circ$, $d_0 = 25\text{m}$, $v_0 = 20\text{m/s}$ and $\beta_0 = 0.5 + 0.5j$, $M = 32$, $\text{SNR} = 10\text{dB}$.  }
    \label{fig:6}
\end{figure}
\\\indent In Fig. 4, we show the achievable downlink rates of the communication link with increasing antenna number at the RSU. In the considered scenario, the vehicle starts from one side of the RSU, then passes in front of the RSU to its other side. It can be observed that all the rates increase initially and then decrease with the evolution of time, which is caused by the change of the distance between the vehicle and the RSU. The maxima in the rates correspond to the positions where the vehicle is closest to the RSU, i.e., at the point where $\theta = 90^\circ$. Moreover, we see that by increasing both the transmit and receive antenna numbers at the RSU, the achievable rate is on the rise thanks to the improved array gain.
\\\indent In Fig. 5 and Fig. 6, we demonstrate the results for radar sensing in terms of root mean squared error (RMSE) for both angle and distance tracking. As expected, the general trend shows that the error reduces when the vehicle is approaching, and increases when it is driving away. When the vehicle is in front of the RSU, i.e., in the time region 1200 - 1400 ms when $\theta \approx 90^\circ$, the angle changes too fast for the EKF tracking, which causes the spikes for angel estimation. Fortunately, these error spikes will not affect the achievable rate significantly as shown in Fig. 4, despite that some small fluctuations may appear on the rate curves at the corresponding moments. One can see in both figures that the real RMSEs are tightly lower-bounded by the predicted RMSEs (which are also predicted PCRBs according to Theorem 1) in general \footnote{Note that the RMSE may exceed the predicted bound since the predicted RMSE does not equal to its real counterpart.}, which proves the correctness of our theoretical derivation in Sec. IV. In addition, it is interesting to note that crossover may occur for the 16- and 32-antenna curves in both Fig. 5 and Fig. 6. This is because when the vehicle is far away, i.e., when the receive SNR is low, large antenna array might be more likely to experience beam misalignment, since larger arrays will generate narrower beams. While such drawbacks can be compensated by the array gain, the 32-antenna array does not have sufficient gain to mitigate this effect, which hence performs worse than the 16-antenna case. For 64- and 128-antenna cases, the crossover vanishes thanks to the higher array gain.
\begin{figure}[!t]
\centering
\subfloat[]{\includegraphics[width=0.85\columnwidth]{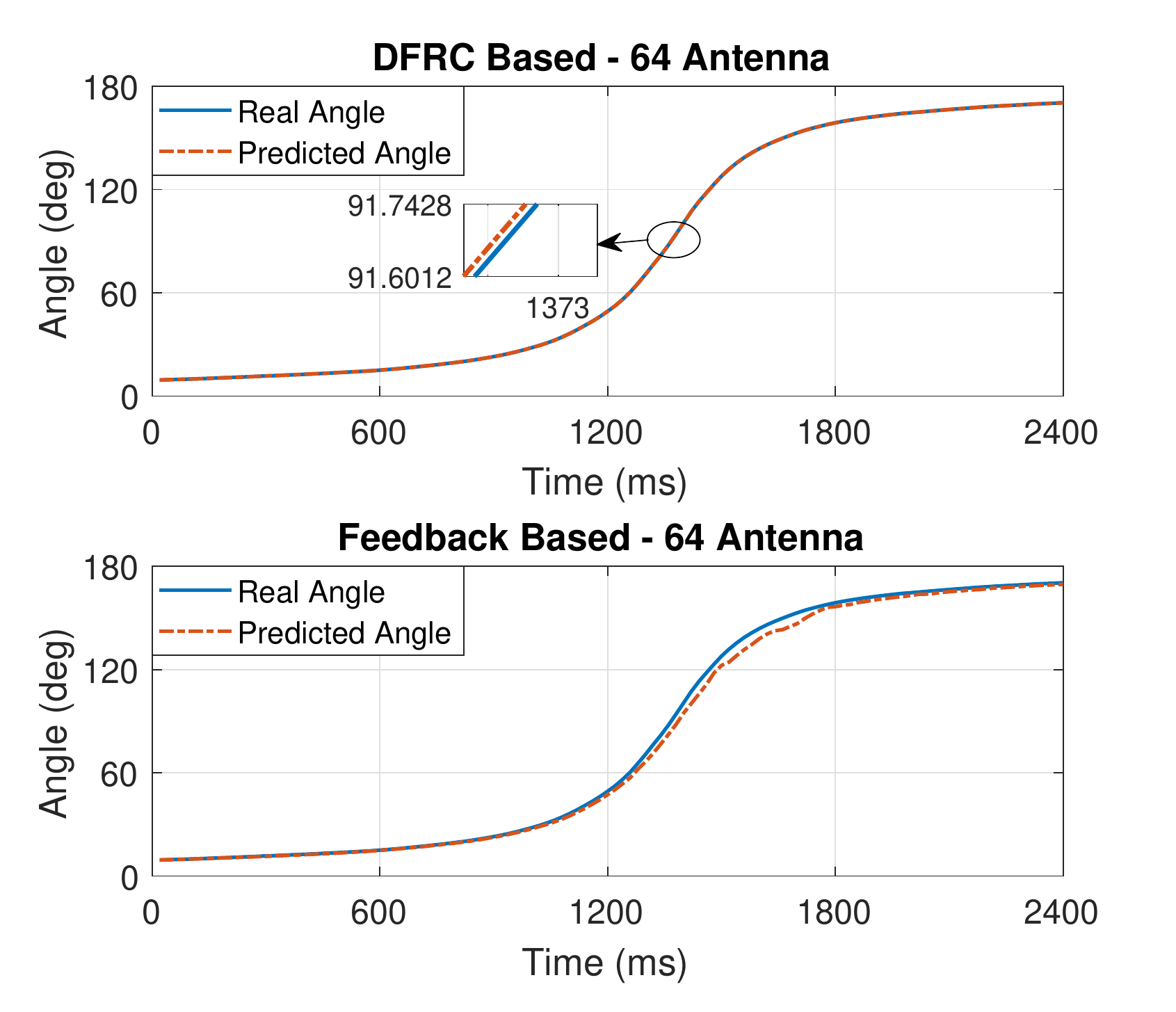}
\label{fig7a}}
\vspace{0.1in}
\hspace{.1in}
\subfloat[]{\includegraphics[width=0.85\columnwidth]{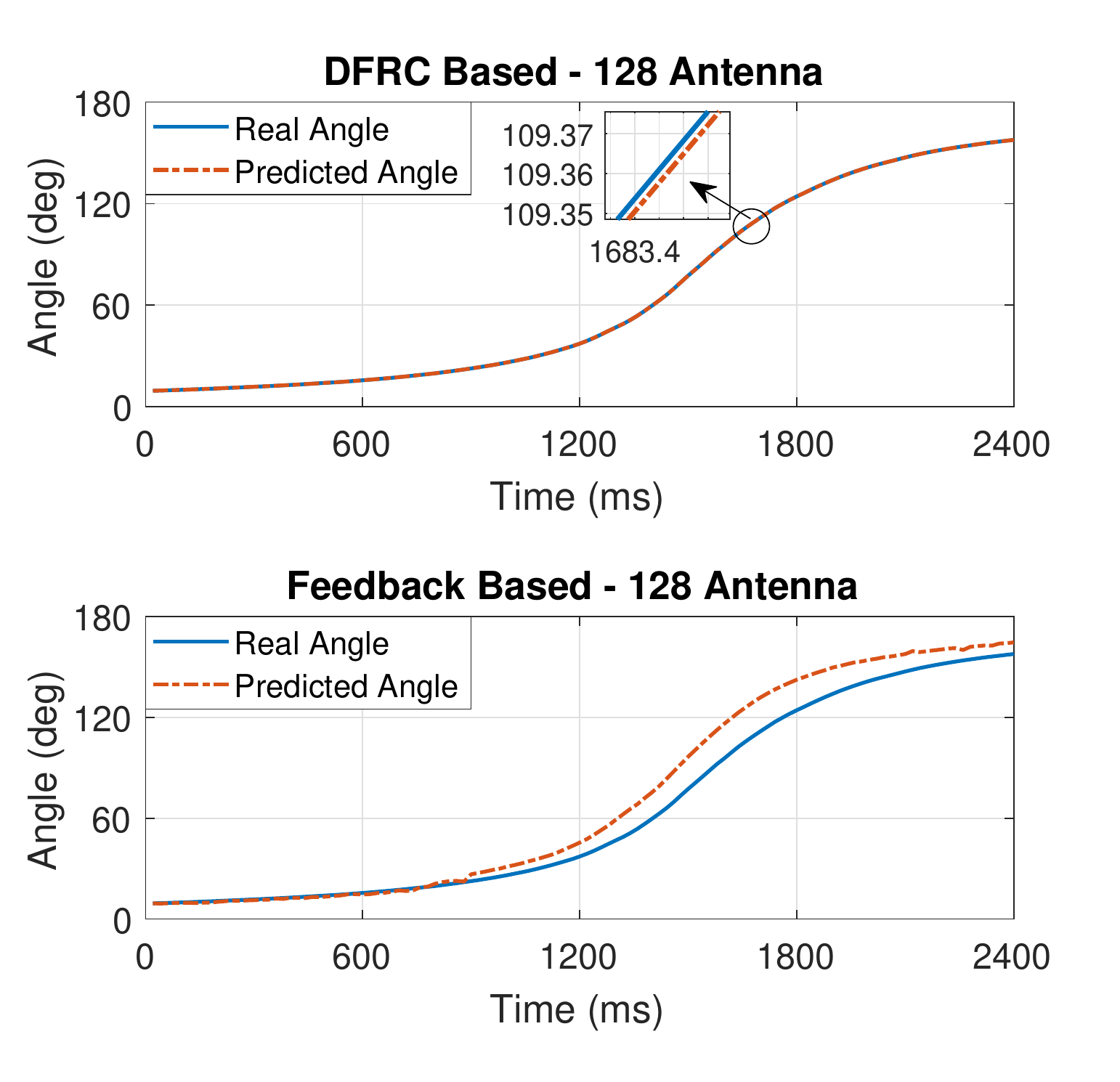}
\label{fig7b}}
\caption{Angle tracking performances for radar- and feedback-based schemes. (a) Initial state is set as $\theta_0 = 9.2^\circ$, $d_0 = 25\text{m}$, $v_0 = 18\text{m/s}$, ${\beta _0} = \frac{{\sqrt 2 }}{2} + \frac{{\sqrt 2 }}{2}j$ and $\alpha_0 = 25$, $N_t = N_r = M = 64$, $\text{SNR} = 10\text{dB}$; (b) Initial state is set as the same as (a), antenna number $N_t = N_r = M = 128$, and $\text{SNR} = 10\text{dB}$.}
\label{fig7}
\end{figure}
\begin{figure}[!t]
    \centering
    \includegraphics[width=0.85\columnwidth]{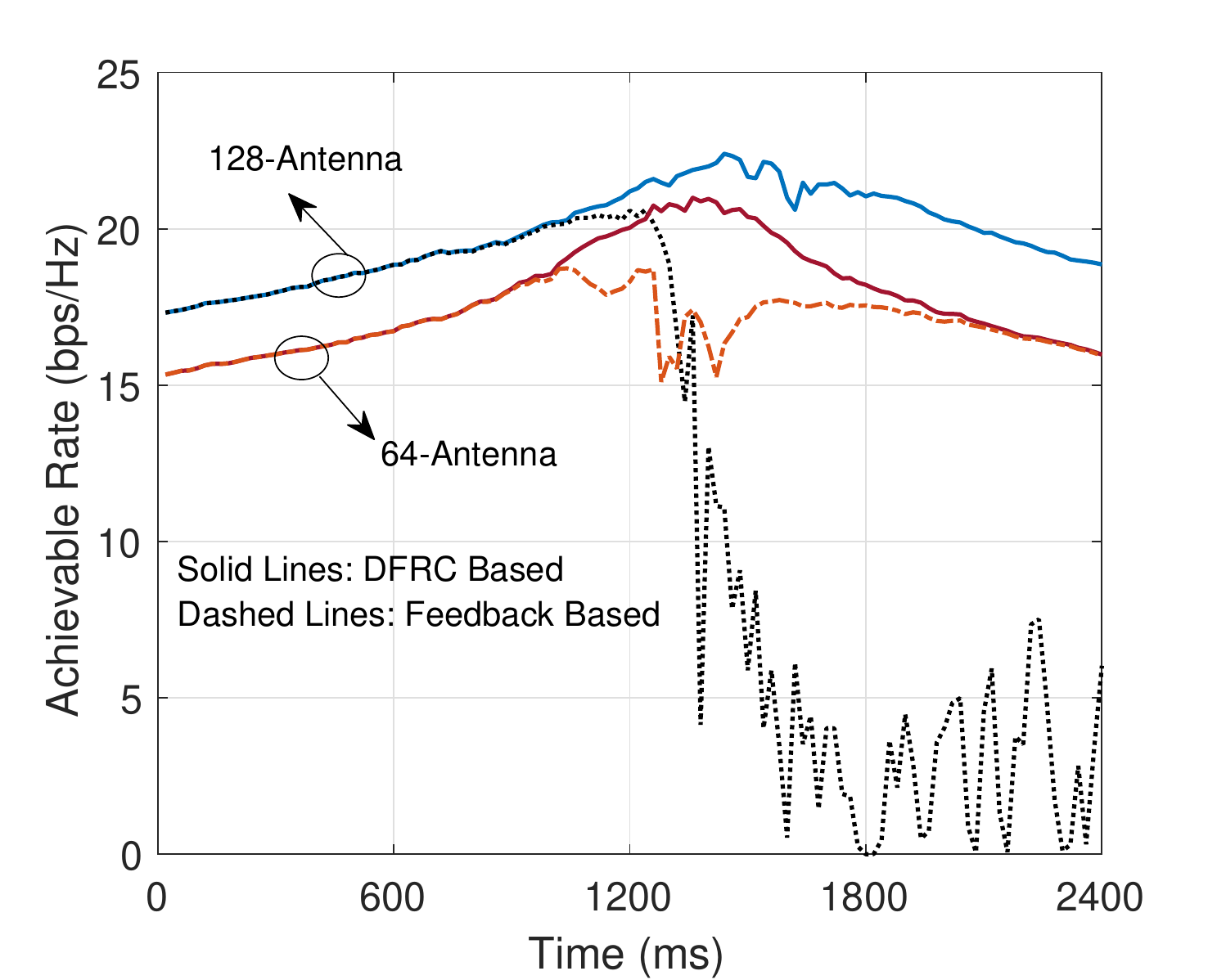}
    \caption{Achievable rate performances for radar- and feedback-based schemes, with initial state $\theta_0 = 9.2^\circ$, $d_0 = 25\text{m}$, $v_0 = 18\text{m/s}$, ${\beta _0} = \frac{{\sqrt 2 }}{2} + \frac{{\sqrt 2 }}{2}j$ and ${\tilde \alpha} = 25$, antenna number $N_t = N_r = M = 64$, and $\text{SNR} = 10\text{dB}$.  }
    \label{fig:8}
\end{figure}
\begin{table*}
\centering
\caption{Initial States for the Multi-Vehicle Scenario}
\begin{tabular}{@{}l|ccccc@{}}
\toprule
 & Vehicle 1 & Vehicle 2 & Vehicle 3 & Vehicle 4 & Vehicle 5 \\ \midrule
Angle & $7.66^\circ$ & $6.56^\circ$ & $5.74^\circ$ & $5.10^\circ$ & $4.59^\circ$ \\ 
Distance & 30m & 35m & 40m & 45m & 50m \\ 
Velocity & 20m/s & 18m/s & 16m/s & 12m/s & 10m/s \\ 
Reflection Coefficient & $2+2j$ & $1+j$ & $0.5+0.5j$ & $0.3+0.3j$ & $0.2+0.2j$ \\ \bottomrule
\end{tabular}
\end{table*}
\subsection{Performance Comparison for DFRC and Benchmark Feedback-Based Schemes}
In this subsection, we compare the performance of the proposed DFRC-based beam tracking scheme and the benchmark communication-only feedback-based method. In conventional EKF based beam tracking schemes such as \cite{7905941,8851151,8830375}, the transmitter sends a single pilot vector to the receiver at each epoch. The receiver then combines the pilot by a receive beamformer, estimates the angle and feeds it back to the transmitter, which is used for predicting the transmit beam of the next time-slot. Note that all the existing EKF based beam tracking schemes employ state evolution models that are different to that of our paper \cite{7905941,8851151,8830375}. For the sake of fairness, we use the same state evolution model in the feedback based scheme for comparison, except that the reflection coefficient $\beta$ is now replaced by the LoS channel coefficient $\alpha$, which is assumed to be perfectly known. The angle measurement model for the feedback scheme is based on (\ref{eq11}), with the pilot symbol being matched-filtered. The distance and the velocity measurements are based on the time delay and the Doppler shift as well. Since there is only one single pilot being employed for tracking in the feedback based method, the matched-filtering gain is $G = 1$ in contrast to the DFRC case where $G = 10$. As a consequence, the measurement variances for the feedback-based approach are at an order of magnitude that is 10 times of that of the proposed DFRC scheme. The initial state of the vehicle is set as $\theta_0 = 9.2^\circ$, $d_0 = 25\text{m}$, $v_0 = 18\text{m/s}$, ${\beta _0} = \frac{{\sqrt 2 }}{2} + \frac{{\sqrt 2 }}{2}j$ and ${\tilde \alpha} = 25$. Note that here we set ${\tilde \alpha} = d_0$ such that the modulus of the initial channel coefficient $\alpha_0$ is 1, which is the same as the reflection coefficient $\beta_0$ used in the DFRC scheme. Finally, we assume $N_t = N_r = M$ for fair comparisons.
\\\indent We first look at the angle tracking performance in Fig. 7(a) and (b). Thanks to the matched-filtering gain in the DFRC based technique, both of the figures reveal that the proposed scheme can accurately track the variation of the vehicle's angle, while the feedback-based scheme shows larger tracking errors. Another important reason for this is that the feedback based approach requires the vehicle to use a receive beamformer to combine the pilot signal, which projects the pilot signal to a lower-dimensional space. This inevitably causes the loss of the angular information. On the other hand, the proposed DFRC scheme does not perform receive beamforming for the reflected echoes, which preserves most of the angular information. It is interesting to see from Fig. 7(b) that when the antenna array has larger size, the tracking error for the feedback based approach goes up, since the beam becomes narrower and the added SNR gain is not sufficient.
\\\indent In Fig. 8, we show the achievable rates for both DFRC and feedback based techniques. It can be observed that at the beginning when the angle variation is relatively slow, the feedback technique leads to almost the same rate performance as that of the proposed method. When the vehicle is approaching the RSU, however, the angle begins to vary rapidly, and the rate of the feedback method decreases drastically, which is consistent with the associated angle tracking performance in Fig. 7. Moreover, it can be observed in the 64-antenna case that the rate of the feedback based method catches up with that of the DFRC method when the vehicle is driving away. This is because when the angular variation is slow, the angle tracking error becomes acceptable, and the EKF is still able to correct the angle deviation as shown in Fig. 7(a). Nevertheless, it fails to do so in the 128-antenna scenario given the narrower beam and higher misalignment probability, as depicted in Fig. 7(b).
\subsection{Performance for Tracking Multiple Vehicles}
In Figs. 9-11, we evaluate the performance of the proposed power allocation algorithm for joint sensing and communication of multiple vehicles. Without loss of generality, we consider 5 vehicles driving towards the same direction, with their initial states being set as TABLE. I. The transmit SNR in the multi-vehicle scenario is redefined as $\frac{P_T}{\sigma^2}$, where $P_T$ is the total power budget. The antenna numbers are set as $N_t = N_r = 128$ for the RSU, and $M = 32$ for the vehicles. In order to set the rate threshold $R_t$ in the problem (\ref{eq72}), we use the water-filling rate under the given power budget $P_T$ as a reference, which we denote as $R_{\max}$. This rate is obtained by using (\ref{eq_wf_solution}), and is the upper-bound of all the achievable sum-rates. Accordingly, we use $R_t = 0.9R_{\max}$ throughout the following results, and compare the performance of the proposed scheme with that of the water-filling power allocation under the same EKF beam tracking framework.
\begin{figure}[!t]
    \centering
    \includegraphics[width=0.85\columnwidth]{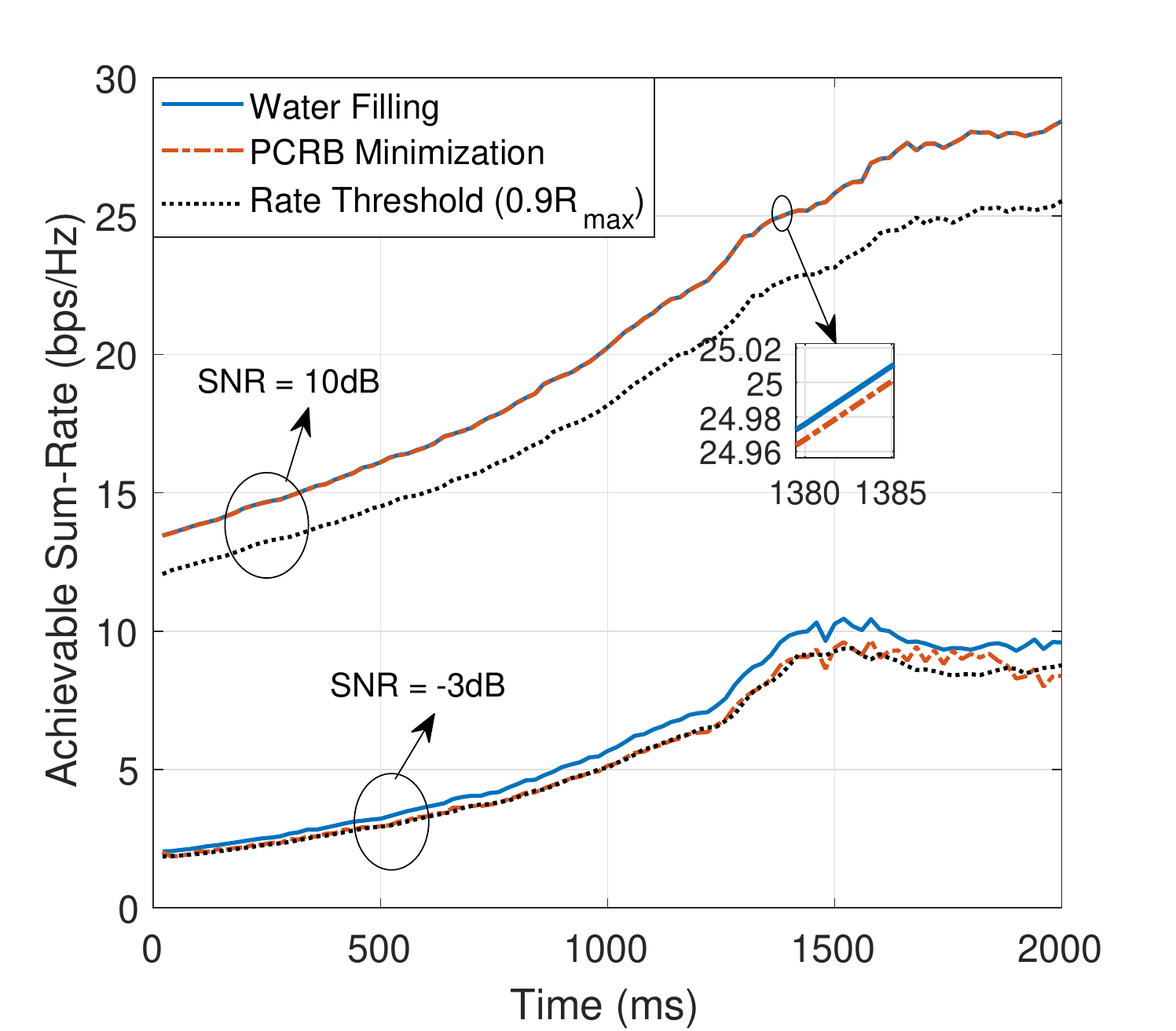}
    \caption{Achievable sum-rate for multiple vehicles. Initial states are configured as TABLE I. $N_t = N_r = 128$, $M = 32$, $R_t = 0.9R_{\max}$.}
    \label{fig:9}
\end{figure}
\\\indent We firstly show the sum-rates obtained for both the water-filling and the proposed PA in Fig. 9 under $\text{SNR} = -3\text{dB}$ and $10\text{dB}$, respectively. It can be seen that when the SNR is low, the rate constraint for the PA problem (\ref{eq72}) is tight, i.e., the resultant sum-rate equals to the threshold in general. Note here that the sum-rate might exceed the limit of the threshold $R_t$, since $R_t$ is set based on the predicted angles, where the beamforming gain is ideally set as 1. Nevertheless, the actual sum-rate is calculated based on the real angles in the real channel, in which case the beamforming gain will always be lower than 1. On the other hand, when the SNR is high, the rate constraint is no longer activated, i.e., the obtained sum-rate is always greater than the threshold, which shows almost the same performance with its water-filling counterpart.
\begin{figure}[!t]
    \centering
    \includegraphics[width=0.85\columnwidth]{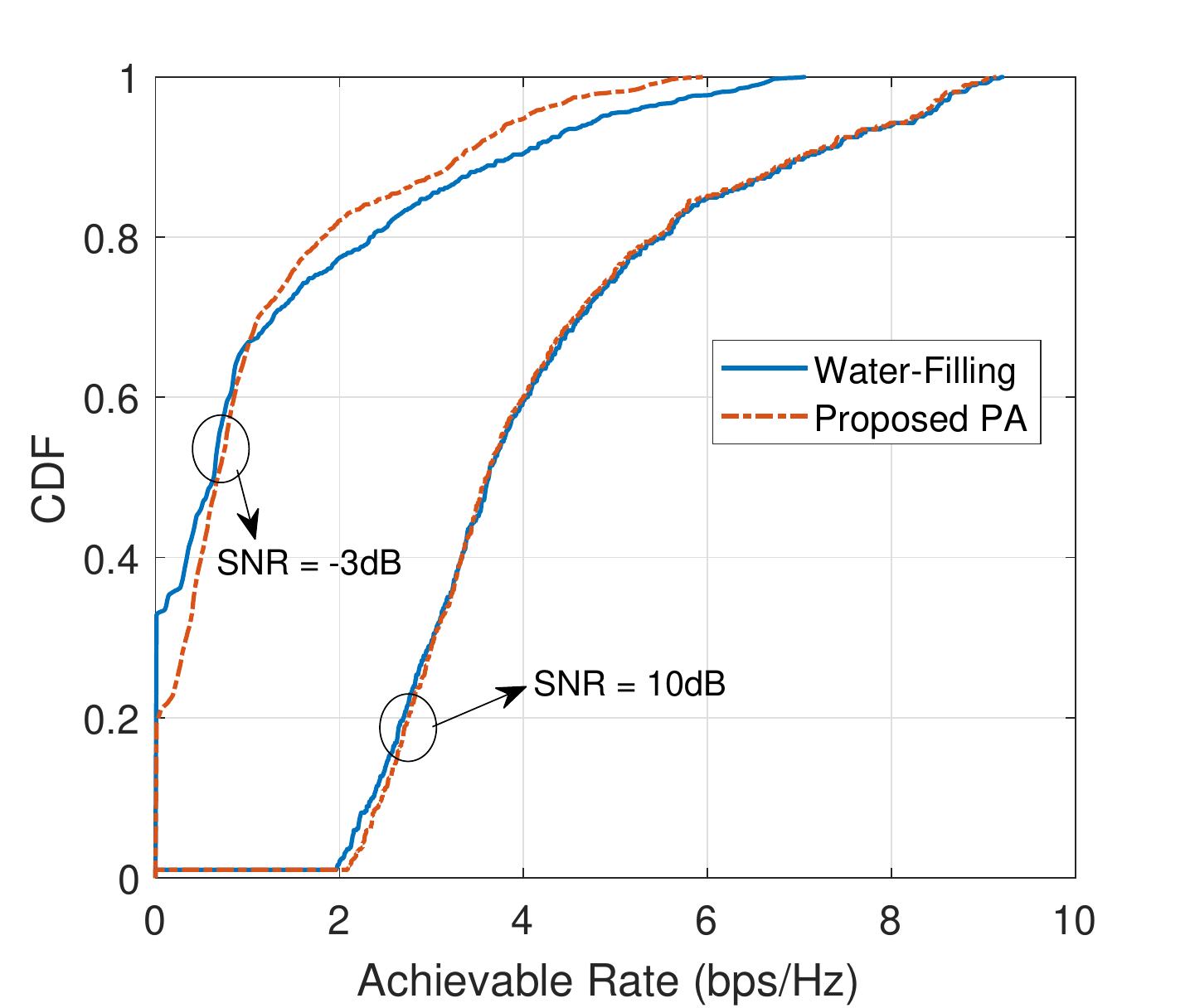}
    \caption{CDF of the achievable rates for multiple vehicles. Initial states are configured as TABLE I. $N_t = N_r = 128$, $M = 32$, $R_t = 0.9R_{\max}$.}
    \label{fig:10}
\end{figure}
\begin{figure}[!t]
    \centering
    \includegraphics[width=0.85\columnwidth]{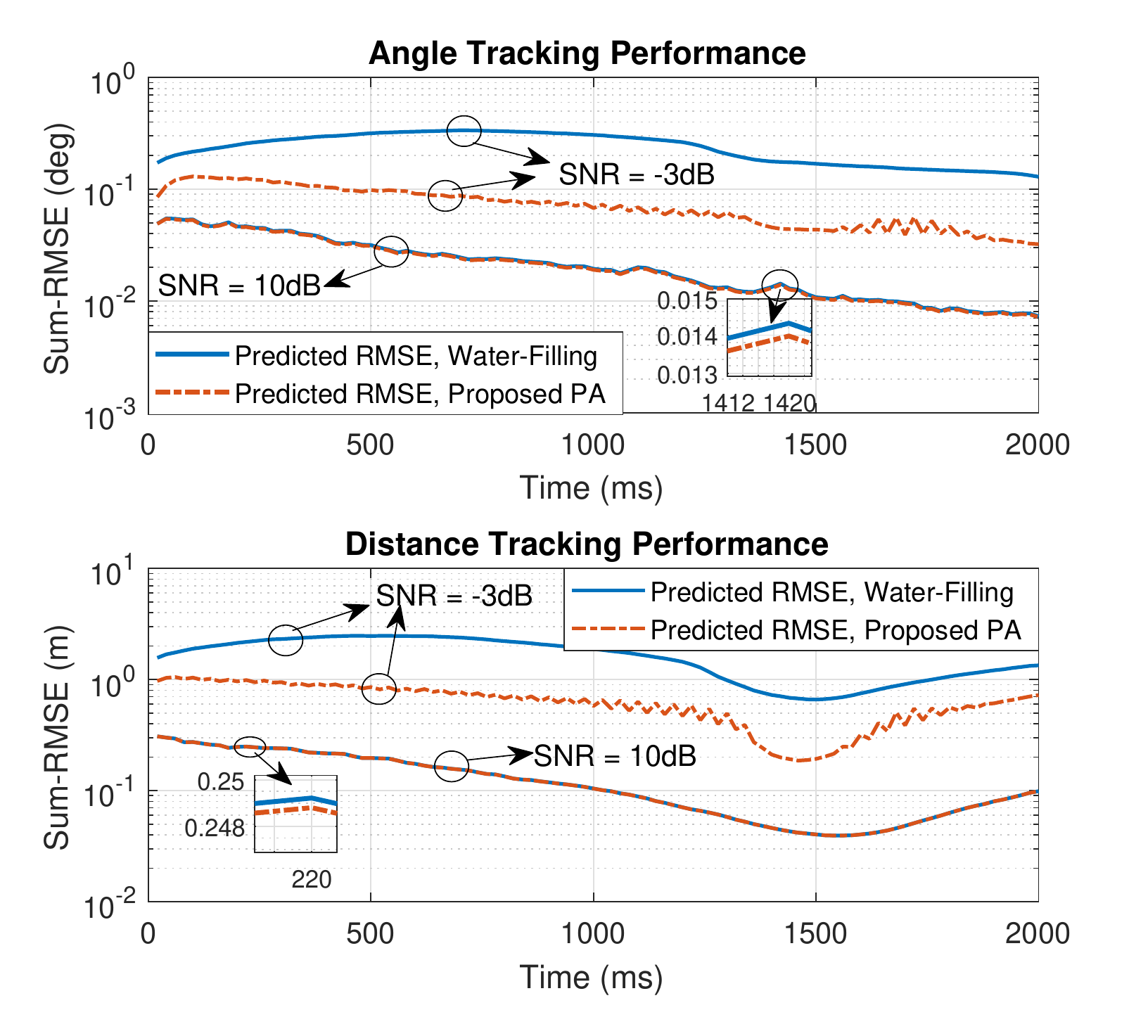}
    \caption{Angle and distance tracking performance for multiple vehicles. Initial states are configured as TABLE I. $N_t = N_r = 128$, $M = 32$, $R_t = 0.9R_{\max}$.}
    \label{fig:11}
\end{figure}
\\\indent In Fig. 10, we explore the rate fairness performance by showing the empirical cumulative distribution functions (CDF) in terms of the achievable rates for each vehicle. One notices that while the water-filling PA might have high probability to generate higher sum-rate, the proposed PA addresses fairness in the sense of ensuring the minimum achievable rate. This has been evidently shown in both SNR regimes, where crossovers between two CDF curves appear at $1\text{bps/Hz}$ and $3.5\text{bps/Hz}$ for $\text{SNR} = -3\text{dB}$ and $10\text{dB}$, respectively. This is because the water-filling scheme is more likely to allocate small or even zero power to those vehicles with bad channel quality. As a result, the proposed PA technique achieves more fairness among all the vehicles.
\\\indent Finally, in Fig. 11, we investigate the sensing performance of the proposed PA in terms of the predicted RMSE, and again, with the water-filling approach as a benchmark. As expected, at the high-SNR regime, both techniques show similar sensing performances. In fact, we observe in our simulation that both of the proposed algorithm and the water-filling scheme are likely to generate a uniform power allocation when the SNR is high. It is worth highlighting that at the low-SNR regime, the proposed PA scheme considerably outperforms its water-filling counterpart on both angle and distance tracking. This is because the water-filling PA is not tailored for sensing, while the proposed PA minimizes the summation of the PCRB of all the vehicles and guarantees the rate constraint at the same time. In other words, the proposed method is able to trade off between the communication and the sensing performances by sacrificing the communication rate for improved estimation accuracy. A remarkable conclusion that can be learned from the above results is that under the considered PA scenario, \emph{higher estimation accuracy does not necessarily mean higher sum-rate}. By using the proposed PA, the RMSE of the angle has been reduced to the level of $0.01^\circ$, which improves the beamforming gain. Nonetheless, this power allocation is tailored for minimizing the sum-PCRB, which is different from that of the optimal water-filling method, and may thus lead to the loss in the SNR of the vehicles with good channel conditions. On the other hand, although the water-filling PA might have larger estimation errors on the angle, the power allocated to the vehicles can still compensate for the loss in the beamforming gain.

\section{Conclusion}
In this paper, we have proposed a novel predictive beamforming design for the vehicular network by utilizing joint radar and communication functionalities deployed on the road side unit (RSU). Aiming for tracking the angular variation of the vehicle, we have proposed an extended Kalman filtering (EKF) framework that builds upon the measurements of the echo signal as well as the state evolution model of the vehicle. In order to minimize the tracking errors for multiple vehicles in the Kalman iteration while guaranteeing the downlink quality-of-service (QoS), we have proposed an optimization-based power allocation scheme that is capable of minimizing the posterior Cram\'er-Rao bound (PCRB) for both angle and distance estimation under given sum-rate threshold and power budget. Numerical results have been provided to validate the proposed techniques, which have shown that the dual-functional radar-communication (DFRC) based beamforming design significantly outperforms the communication-only feedback-based schemes, and that the proposed power allocation algorithm yields the minimum tracking error while achieving a favorable trade-off between sensing and communication performances.

%




\ifCLASSOPTIONcaptionsoff
  \newpage
\fi



\bibliographystyle{IEEEtran}
\bibliography{IEEEabrv,Veh_REF}
\end{document}